\newcommand{\pre}[1]{^\bullet #1}
\newcommand{\post}[1]{#1 ^\bullet}
\newcommand{\cnfo}{\mathrel{\natural}}
\newcommand{\conc}{\textbf{co}}
\DeclareMathOperator{\ltlx}{\mathrm{LTL-X}}
\DeclareMathOperator{\unf}{\textsc{unf}}
\DeclareMathOperator{\atlx}{\mathrm{1ATL-X}}
\DeclareMathOperator{\atls}{\mathrm{ATL*}}
\DeclareMathOperator{\MG}{\textsc{mg}}  
\DeclareMathOperator{\ts}{\textsc{ts}}  
\title{Looking for winning strategies in two-player games on Petri nets with partial observability}
\author{Federica Adobbati\corresponding\\DISCo, Univ. di Milano-Bicocca\\viale Sarca 336-U14, Milano, Italy\\
f.adobbati@campus.unimib.it
\and Luca Bernardinello\\DISCo, Univ. di Milano-Bicocca\\viale Sarca 336-U14, Milano, Italy\\
luca.bernardinello@unimib.it\\
\and Lucia Pomello\\DISCo, Univ. di Milano-Bicocca\\viale Sarca 336-U14, Milano, Italy\\
lucia.pomello@unimib.it} 
\date{}
\begin{document}
\runninghead{F. Adobbati, L. Bernardinello, L. Pomello}{A game on Petri nets with partial observability}
\maketitle
\begin{abstract}
  We define a game on 1-safe Petri nets, where a \emph{user} plays 
  against an \emph{environment} in order to reach a goal on the 
  system. The goal is expressed through an $\ltlx$ formula, and 
  represents a behaviour of the system that the user needs to 
  guarantee. The user can try to reach his goal by controlling a 
  subset of transitions, and by observing a subset of local states. 
  Although we do not put any requirement on the local states 
  observable by the user, we assume that he cannot be sure to 
  observe them in the exact moment in which they become marked. 
  For this reason, we define a notion of \emph{stability} of the 
  observation. 
  We propose a method to determine whether the user has a 
  \emph{winning strategy}, i.e. if he can win every play by 
  taking some decisions based on the information available for him. 
\end{abstract}
\begin{keywords}
Petri nets, distributed games, temporal logics, partial observability
\end{keywords}
\section{Introduction}
Games can be used as a formal tool to model the interaction 
between agents. 
In this paper we define a game between a \emph{user} and 
an \emph{environment} on a distributed system modelled 
as a 1-safe Petri net. 
The task of the user is to control the system, so that only 
executions satisfying a desired property are allowed. 
In general, the environment does not cooperate with the 
system, and may be hostile or indifferent. 

In order to reach his goal, the user can control the 
occurrence of a subset of transitions, i.e. he can decide 
whether to fire them or not, when they are enabled. 
The rest of transitions belongs to the environment, and it 
is uncontrollable by the user. 
We assume a fairness constraint for the environment: 
no uncontrollable transition can be permanently enabled in 
the system and never fire. 
A \emph{play} is represented through a run on the unfolding 
of the net. 
A run represents an execution of the system, and records 
all the occurrences of both controllable and uncontrollable 
transitions. 
The user wins the play if the run satisfies the property 
that he needs to guarantee. 

In order to reach his goal, the user must decide the  transitions to 
fire based on his information on the current state of the 
system.  
We assume that he has \emph{no memory}, hence his decisions 
cannot be based on the states previously crossed during the 
play, and that the user has \emph{partial observability} 
on the system, i.e. he can never know the value of some of 
the local states. 
Even for observable states, we assume that the 
information about their values may take some time to 
arrive, and therefore the user can use them to take  
decisions only if they are \emph{stable}. 
This is motivated by the presence of different components 
in a distributed and concurrent system: in a realistic 
model, each communication from a component to another 
takes some time, and while the communication happens, the 
information may become outdated, due to the occurrence of 
a new transition.
Finally, we assume that the structure of the system is 
known, and the user can exploit all the information that he 
can derive from it. 
In particular, we will show in Sec.~\ref{sec:pn} and \ref{sec:game_net} 
how to add some places to the net that do not alter its 
behaviour, but allow the user to take his decisions 
combining the information derived from his observations 
and from the structure of the system. 

Our goal is to determine whether the user has a \emph{winning strategy}, i.e. if he can select transitions based on his 
observations that make him always win. 
We consider the case in which the goal of the user can 
be expressed through a fragment of LTL, and 
we propose a method based on the marking graph to find a 
winning strategy, if one exists.

Possible applications of such a game could be 
in the frame of verification, adaptation and control of distributed systems; 
so that, in the case of a winning strategy for the user with respect to a behavioural property, 
the system model could be adapted imposing that specific property, 
for example by adding a controller component which implements the user behaviour 
by synthesising the winning strategy; 
a reference for this sort of applications is for example \cite{RW87}.

The rest of the paper is organized as follows. Sec.~\ref{sec:pn} 
presents the basic definitions related to Petri nets, 
marking graphs, and unfoldings; in Sec.~\ref{sec:game_net}, 
we introduce the game formally, and discuss some examples 
to motivate our modelling choices. 
Sec.~\ref{sec:pn2cgs} and Sec.~\ref{sec:algo} provide the 
procedure to check whether the user has a winning strategy, 
and show a relation between our game and those 
defined on concurrent structures. 
In Sec.~\ref{sec:art} we discuss some related works.
Finally, Sec.~\ref{sec:concl} ends the paper and suggests 
future developments of this works.
\section{Petri nets} 
\label{sec:pn}
This section recalls some basic definitions and features of Petri nets, which will be used throughout the paper. As main reference see \cite{m89}.
Petri nets were introduced by C.A. Petri as a formal tool to represent
flows of information in distributed systems. 
In the last decades, several classes of nets have been defined and studied. 
In this work we use the class of \emph{1-safe} nets.

\begin{definition} 
A \emph{net} is a triple $N=(P, T, F)$, where $P$ and $T$  are disjoint sets.
The elements of $P$ are called \emph{places} and are represented by circles, 
the elements of $T$ are called \emph{transitions} and are represented by squares.
$F$ is called \emph{flow relation}, with
$F \subseteq (P \times T )\cup (T \times P)$, and is represented by arcs.
\end{definition}
For each element of the net $x \in P \cup T$, the \emph{pre-set} of $x$
is the set $\pre{x} = \{ y \in P \cup T \mid (y,x) \in F \}$, 
the \emph{post-set} of $x$ is the set
$\post{x} = \{ y \in P \cup T \mid (x,y) \in F\}$.
The previous notation can be extended to subsets of elements $A \subseteq P \cup T$: 
$\pre{A} = \bigcup_{x \in A} \ \pre{x}$ and $\post{A} = \bigcup_{x \in A} \post{x}$.
We assume that
each transition has non-empty pre-set and post-set.

Two transitions, $t_1$ and $t_2$, are \emph{independent}  if
$(\pre{t_1} \cup \post{t_1})$ and $(\pre{t_2} \cup \post{t_2})$
are disjoint. They are in \emph{conflict}, denoted with $t_1\# t_2$, if $\pre{t_1} \cap \pre{t_2} \neq \emptyset$.

A net $N'= ( P', T', F')$ is a \emph{subnet} of $N = (P, T, F)$ if 
$P' \subseteq P$, $T', \subseteq T$, and 
$F'$ is $F$ restricted to the elements in $N'$.

\begin{definition} 
	A \emph{net system} is a quadruple
	$\Sigma = (P, T, F, m_{in})$
	consisting of a finite net $N = (P,T,F)$ and an
	\emph{initial marking}  $m_{in}: P \rightarrow \mathbb{N}$.
\end{definition}
A transition $t$ is \emph{enabled} at a marking $m$, denoted $m[t\rangle$,
if, for each $p\in {\pre{t}}$, $m(p) > 0$.
A transition $t$, enabled at $m$, can  \emph{occur} (or \emph{fire}) 
producing a new marking $m'$ (denoted $m[t\rangle m'$), where
\[
    m'(p) =
      \begin{cases}
        m(p) + 1 & \text{if } p \in t^\bullet \setminus {}^\bullet t\\
        m(p) - 1 & \text{if } p \in {}^\bullet t \setminus t^\bullet\\
        m(p)     & \text{otherwise}
      \end{cases}
\]
A marking $m'$ is reachable from another marking $m$, 
if there is a sequence of transitions $ t_1 t_2 \dots t_n$ such that 
$m[t_1 \rangle m_1 [ t_2 \rangle \dots m_{n-1}[ t_n \rangle m'$, this is also denoted with 
$m[t_1 t_2 \dots t_n \rangle m'$;
$[m \rangle$ denotes the set of markings reachable from $m$.
A marking $m$ is 
\emph{reachable} if it is reachable from the initial marking $m_{in}$, i.e.: if $m \in [m_{in} \rangle$; $[m_{in} \rangle$ will be also denoted $M$ in the next sections.

A net system is \emph{1-safe} if $m(p) \leq 1$, 
for each place $p$ and for each reachable marking $m$. 
Markings in 1-safe net systems
can, and will be, considered as subsets of places.

In a net system, two transitions, $t_1$ and $t_2$, are
\emph{concurrent} at a marking $m$
if they are independent and both enabled at $m$.

\begin{example}
	Fig.~\ref{fig:rapina} shows a net system with initial marking $m_{in} = \{u, r\}$. $sc_1$ and $sc_2$ are transitions which are in conflict; whereas $sc_1$ and $t$ are independent and both enabled at $m_{in}$, hence they are concurrent at $m_{in}$.
	\end{example}
\vspace{\baselineskip}

\noindent
The sequential behaviour of a net system 
can be described by an initialized labelled transition system, 
where the states correspond to reachable markings, 
with the initial one corresponding to the initial marking, 
and each arc is labelled by the transition leading from the source marking to the target one.

\begin{definition} 
	Let $\Sigma = (P, T, F, m_{in})$ be a  net system. 
	Its \emph{sequential marking graph} is a quadruple $\MG(\Sigma) = (M, T, A, m_{in})$
	where $M$ is the set of reachable markings in $\Sigma$, and where
	$A= \{ (m, t, m')\ | \ m, m' \in M, \  t \in T \ and \  m[t \rangle m' \}$.
\end{definition}
\vspace{\baselineskip}
\subsection{Implicit places}\label{s:implicit}
From now on, we will work on 1-safe net systems.
As remarked above, each node of the marking graph of a 1-safe net is a set of places.
Hence, we can associate to each place its \emph{extension}, namely
the set of reachable markings to which it belongs:
\[
    \forall p\in P\quad r(p) = \{ m\in \MG(\Sigma) \mid m(p) = 1 \}
\]
The extension of a place satisfies a few properties, which are
immediate consequences of the firing rule of nets, together with
1-safeness. The main property we need here can be called the
\emph{uniform crossing} property: for any given transition $t$, if
one occurrence of $t$ in the marking graph enters $r(p)$, then
all occurrences of $t$ do the same, and analogously when
an occurrence leaves $r(p)$. This property can be defined more
abstractly in the context of general labelled transition systems.

A labelled transition system is a structure $\ts = (Q,L,\Delta)$, where
$Q$ is the set of states, $L$ a set of \emph{labels}, and
$\Delta\subseteq Q\times L\times Q$ is the set of edges (edges are usually
called \emph{transitions}, but here we prefer to reserve this term
to the transitions of net systems).
A subset $r$ of states satisfies the uniform crossing property for
$\ts$ if, for all $t \in L$,
\begin{align*}
    (q_1,t,q_2), (q_3,t,q_4)\in \Delta, q_1 \in r, q_2 \not\in r\quad
    \Longrightarrow\quad q_3\in r, q_4 \not\in r\\
    (q_1,t,q_2), (q_3,t,q_4)\in \Delta, q_1 \not\in r, q_2 \in r\quad
    \Longrightarrow\quad q_3\in r, q_4 \not\in r
\end{align*}
A subset of states which
satisfies the uniform crossing properties is called a \emph{region}
(refer to~\cite{BBD15} for an extensive report on region theory, with
proofs of the properties recalled in the rest of this section). A region $r$
has a set of entering labels, denoted by $\pre{r}$, and a set of exiting labels,
denoted by $\post{r}$: if $r$ is a region and
$t\in L$ a label, then $t \in {\pre{r}}$ if there is an edge $(q_1,t,q_2)$
entering $r$, namely $q_1 \not\in r$ and $q_2 \in r$; symmetrically,
$t \in \post{r}$ if there is an edge leaving $r$.

If $r$ is a region
of the marking graph of a net system $\Sigma$, then
we can add a corresponding place to the net without changing the
behaviour: the marking graph of the extended net system is isomorphic
to the marking graph of $\Sigma$. 
The place $r\in P$ is a precondition of the all the transitions $t\in T$ such that $t\in \post{r}$, and 
a postcondition of the transitions $t\in T$ such that $t\in \pre{r}$.

The set of regions of a labelled transition system satisfies some useful
combinatorial properties.
We say that two regions, $r_1$ and $r_2$, are \emph{compatible}
if there exist three, pairwise disjoint, regions, $u_1, u_2, u_3$
such that $r_1 = u_1 \cup u_2$ and $r_2 = u_2 \cup u_3$.
If $r$ is a region, then its set-theoretic complement
$Q\setminus r$ is a region. 
In general, regions are not closed by union and
intersection, but the union of compatible regions is always a region.
In particular, the union of disjoint regions is a region. 
Since each region can be associated to a (potential) place on the net, these same  
properties can be equivalently stated between places of $\Sigma$. 
In particular, for each place $p\in P$ with extension $r(p)$, 
its complement $p^c$ is the place with extension $Q \setminus r(p)$; 
two places $p_1, p_2 \in P$ are \emph{compatible}, denoted with $p_1 \$ p_2$, 
iff their extensions $r(p_1)$ and $r(p_2)$ are compatible; 
the union of two compatible places $p_1$ and $p_2$ is the place with 
extension $r(p_1) \cup r(p_2)$ and is denoted with $p_1 \lor p_2$. 

We will call \emph{implicit place} of a net system $\Sigma$ a region
of $\MG(\Sigma)$ which is not the extension of a place of $\Sigma$.
Let $H$ be a set of implicit places of $\Sigma$; then $\Sigma_H$
denotes the net system obtained from $\Sigma$ by adding the implicit
places in $H$. For each place $h$ in $H$ we add an arc to the flow
relation from $t$ to $h$ for each $t$ in $\pre{h}$ and an arc from
$h$ to $t'$ for each $t'$ in $\post{h}$. An implicit place belongs to
the initial marking of $\Sigma_H$ if $m_0$ belongs to the corresponding
region.
\subsection{Branching processes and unfoldings}
The non sequential behaviour of 1-safe net systems can be recorded 
by occurrence nets, which are used to represent by a single object 
the set of potential histories of a net system.
In the following, we denote with $F^+$ the transitive closure of $F$, and with $F^*$ the reflexive closure of $F^+$. 

Given two elements $x, y \in P \cup T$ we write $x \cnfo y$, if there exist
$t_1, t_2 \in T : t_1 \neq t_2, t_1 F^* x, t_2 F^* y \land
\exists p \in {}^{\bullet}t_1 \cap {}^{\bullet}t_2$.

\vspace{\baselineskip}
\begin{definition} 
A net $N =(B, E, F)$ is an \emph{occurrence net} if
\begin{itemize}
	\item for all $b \in B$, $|\pre{b}| \leq 1$
	\item $F^*$ is a partial order on $B \cup E$
	\item for all $x \in B \cup E$, the set
	$\{ y \in B \cup E \mid y F^* x \}$ is finite
	\item for all $x \in B \cup E$, $x \cnfo x$ does not hold
\end{itemize}
\end{definition}

We will say that two elements $x$ and $y$, $x \neq y$,
of $N$ are \emph{concurrent},
and write $x\, \conc\, y$, if they are not ordered by $F^*$,
and $x \cnfo y$ does not hold.

By $\min(N)$ we will denote the set of minimal elements with
respect to the partial order induced by $F^*$.

A \emph{B-cut} of $N$ is a maximal set of pairwise concurrent
elements of $B$. B-cuts represent potential global states
through which a process can go in a history of the system. 
By analogy with net systems, we will sometimes
say that an event $e$ of an occurrence net is \emph{enabled}
at a B-cut $\gamma$, denoted $\gamma[e\rangle$, if $\pre{e} \subseteq \gamma$.
We will denote by $\gamma + e$ the B-cut
$(\gamma\backslash ^\bullet e) \cup e^\bullet$.
A B-cut is a \emph{deadlock cut} if no event is enabled at it.

Let $\Gamma$ be the set of B-cuts of  $N$.
A partial order on $\Gamma$ can be defined as follows: let
$\gamma _1, \gamma _2$ be two B-cuts. We say
$\gamma _1 \leq \gamma _2$ iff
\begin{enumerate}
	\item $\forall y \in \gamma _2 \ \exists x \in \gamma _1: xF^*y$
	\item $\forall x \in \gamma _1 \ \exists y \in \gamma _2: xF^*y$
\end{enumerate}
In words, $\gamma_1 \leq \gamma_2$ if any condition
in the second B-cut is or follows a condition of the first B-cut and
any condition in the first B-cut is or comes before a condition of the
second B-cut. $\gamma_1 < \gamma_2$ if the conditions 
above hold and  $\exists x\in \gamma _1$,$\exists y \in \gamma _2 : xF^+y$.

A sequence of B-cuts, $\gamma _0 \gamma _1 \dots \gamma _i \dots$
is \emph{increasing} if $\gamma_i < \gamma_{i+1}$ for all $i \geq 0$.
A cut $\gamma$ is \emph{compatible} with an increasing 
sequence of B-cuts $\delta$ iff there are two cuts 
$\gamma_i, \gamma_{i+1}\in \delta$ such that $\gamma_i\leq \gamma \leq \gamma_{i+1}$. 

Given an increasing sequence of B-cuts $\delta$, we define a \emph{refinement} 
$\delta'$ of $\delta$ as an increasing sequence of B-cuts such that for each 
$\gamma\in\delta$, $\gamma$ is also in $\delta'$. 
A \emph{maximal} refinement $\delta'$ is an increasing sequence of 
B-cuts such that there is no $\gamma\not\in\delta'$ compatible with 
$\delta'$.
\begin{example}
Fig.~\ref{fig:play} shows an occurrence net and 
an increasing sequence of B-cuts, represented 
with bold lines. The B-cut $\{b_9, b_{10}\}$ 
is compatible with the sequence in the figure, 
therefore the sequence including the three 
represented B-cuts and $\{b_9, b_{10}\}$ 
is a refinement of the sequence in figure.
\end{example}
We say that an event $e \in E$ precedes a B-cut $\gamma$,
and write $e < \gamma$,
iff there is $y \in \gamma$ such that $eF^+y$.
In this case, each element of $\gamma$ either follows $e$ or is
concurrent with $e$ in the partial order induced by the occurrence net. 

The next definitions are adapted from~\cite{Ebp91}. 
\begin{definition} 
A \emph{branching process} of a 1-safe net system $\Sigma = (P, T, F, m_{in})$
is an occurrence net $N = (B, E, F)$, together with a
labelling function $\mu: B \cup E \rightarrow P \cup T$,
such that
\begin{itemize}
	\item $\mu(B) \subseteq P$ and $\mu(E) \subseteq T$
	\item for all $e \in E$, the restriction of $\mu$ to $\pre{e}$
	is a bijection between $\pre{e}$ and $\pre{\mu(e)}$;
	the same holds for $\post{e}$
	\item the restriction of $\mu$ to $\min(N)$ is a bijection
	between $\min(N)$ and $m_{in}$
	\item for all $e_1, e_2 \in E$, if $\pre{e_1} ={}\pre{e_2}$ and
	$\mu(e_1) = \mu(e_2)$, then $e_1 = e_2$
\end{itemize}
A \emph{run} of $\Sigma$ is a branching process $(N, \mu)$
such that there is no pair of elements $x, y$ in $N$ such that 
$x \cnfo y$.
\end{definition}
For $\gamma$ a B-cut of $N$, the set $\{\mu(b) \mid b \in \gamma\}$
is a reachable marking of $\Sigma$ (\cite{Ebp91}), and we refer to it as the
marking corresponding to $\gamma$.

Let $(N_1, \mu_1)$ and $(N_2, \mu_2)$ be two branching processes of $\Sigma$, 
where $N_i = (B_i, E_i, F_i)$, $i = 1,2$.
We say that $(N_1, \mu_1)$ is a \emph{prefix}
of $(N_2, \mu_2)$ if $N_1$ is a subnet of $N_2$, and 
$\mu_2|_{B_1\cup E_1} =  \mu_1$. 
For any net system $\Sigma$, there exists a unique,
up to isomorphism, maximal branching process of $\Sigma$.
We will call it the \emph{unfolding} of $\Sigma$,
and denote it by $\unf(\Sigma)$ .

A \emph{run} of $\Sigma$ describes a particular
history of $\Sigma$, in which conflicts have been solved.
Any run of $\Sigma$ is a prefix of the unfolding $\unf(\Sigma)$;
we also say that it is a run on $\unf(\Sigma)$.
\section{A two-player game on a 1-safe Petri net}
\label{sec:game_net}
In this section, we consider 1-safe net systems (in the following called only net systems)
in which a subset of transitions 
is \emph{controllable} by a \emph{user}. 
This means that whenever a controllable transition is enabled, the user 
can decide whether to fire it or not. All the other transitions belong
to the \emph{environment}. 
In figures, controllable transitions will be 
represented in dark grey, while environment transitions will be white. 

The goal of the user is to force the behaviour of the system so that every 
execution satisfies a certain behaviour (e.g. reaching and/or avoiding some 
subsets of places). 
In order to reach his goal, the user can observe part of the current 
global state and decide whether to fire some controllable enabled transitions. 
If a controllable transition is in conflict with an uncontrollable transition, 
and they are both enabled, we assume that the user cannot guarantee to fire 
his transition before the environment. 
We model the interaction between the user and the 
environment as a game on the net, where the user and the environment 
are two players challenging each other. 
Informally, a play is an execution on the net, and the user 
wins a play if it satisfies his goal. 
If the user is always able to force the behaviour of the system 
according to his goal, and therefore to win all the plays 
that he enforces from the initial state, 
we say that the user has a \emph{winning strategy} for that goal. 
We assume that the user has \emph{no memory}, 
i.e. his decisions can only be based on the current state of 
the system, and not on its history.

Formally, let $\Sigma = (P, T, F, m_{in})$ be a net system, 
$T_u$ the set of transitions controllable by the user, and  
$T_{env} = T\setminus T_u$ the set of transitions belonging to the 
environment. 
We assume that the user observes a subset of places $P_O \subseteq P$, 
and that he observes which of his transitions are enabled, i.e. 
$^\bullet T_u \subseteq P_O$. 
In figures, observable places are represented with thicker borders. 

In addition to the observability, in order to define what information 
the user can exploit to reach his goal, we need to introduce the 
concept of \emph{stable part} of a marking. 
Let $m\in [m_{in}\rangle$ be a reachable marking. Its stable part is 
defined as the set $m^s = \{p\in m \mid \nexists t_1t_2...t_n\in T^*_{env} : p\in {^\bullet t_n} \wedge (m[t_1...t_{n-1}\rangle m_n \wedge {^\bullet t_n}\subseteq m_n) \}$; 
in other words $m^s$ is the set of places in $m$ that cannot be 
consumed by any sequence of uncontrollable transitions enabled in $m$. 
In what follows, we assume that the user can base his decisions 
only upon the observable and stable parts of markings. This is 
motivated by the fact that in realistic scenarios, even if 
the value of a certain local state is not hidden to the user, 
its information may arrive to the user with a certain delay. 
By definition, if a place is not in the stable part of a 
marking, its value may change due to a transition that is out 
of the control of the user, therefore the information about it 
may arrive outdated to the user, that cannot count on it. 
To clarify this concept, consider the following example.
\begin{figure}[!ht]
    \centering
    \includegraphics[width=0.7\textwidth]{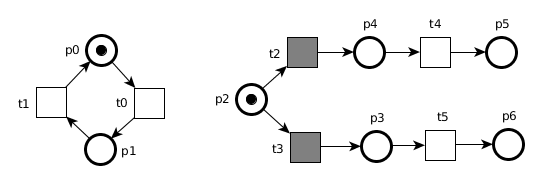}
    \caption{A net system with controllable and uncontrollable transitions 
    (coloured resp. in grey and white)}
    \label{fig:stable}
\end{figure}
\begin{example}
\label{ex:stable}
In the net in Fig.~\ref{fig:stable}, 
assume that $P_O = P$ and that the goal of the user is to reach 
one of the two markings $\{p0, p4\}$ or $\{p1, p3\}$. 
If the user could base his decision on the global states, then 
he could reach his goal by firing $t2$ when the marking is 
$\{p0, p2\}$ or $t3$ when the marking is $\{p1, p2\}$. 
However, even if the user observes an occurrence of $p0$, 
it is unrealistic to assume that the communication arrives without time delay 
and that he succeeds in firing $t2$ before the environment can 
fire $t0$.
In our setting, the user can base his decisions only on the stable 
part of the markings. For both $\{p0, p2\}$ and $\{p1, p2\}$ 
the stable part is $\{p2\}$, therefore the user cannot 
differentiate between the two markings, and cannot reach his goal. 
\begin{figure}
    \centering
    \includegraphics[width = 0.5\textwidth]{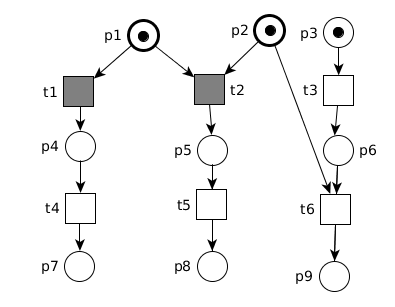}
    \caption{A net system with no stable marking in which $p2$ is enabled}
    \label{fig:stpart}
\end{figure}

In some cases, the definition of stable part of a marking has consequences 
also on the controllable transitions that the user can fire. 
Consider the net system in Fig.~\ref{fig:stpart}, where $p1$ and $p2$ 
are the observable places, and $t1$ and $t2$ are the 
controllable transitions. 
Assume that the goal of the user is to reach $p8$. Intuitively, from the 
initial marking the user cannot prevent the sequence $t3t6$ to fire, and 
therefore he cannot count on the occurrence of $t2$ to win, even if $t2$ 
is controllable and enabled in the initial marking. 
This can be modelled thanks to the definition of stable part of a marking: 
there is no marking where $p2$ is stable, because from every marking containing 
$p2$ there is a sequence 
of uncontrollable transitions that could fire and make $p2$ 
false (for example, as we already observed, the sequence $t3t6$ in the initial marking).
\end{example}
In some cases, the information coming from the structure of the net 
and from some observations, even if delayed, is sufficient for 
the user to reach his goal, but is not included in the stable 
parts of the reachable markings. 
For example, the user may know that an observable place cannot be 
marked anymore in the future or that there must be a token in a certain 
set of observable places, even without knowing precisely in which one (see Ex.~\ref{ex:sature}). 
For this reason, in order to use this information in the strategy, 
we consider some implicit places, corresponding to unions and complements of observable 
and compatible places. 
In particular, the set $P'$ of places that the strategy will consider is defined as follows. 
All the places in $P$ are also in $P'$, i.e. $P\subseteq P'$. 
We denote the set of observable places in $P'$ as $P_O'$. 
$P_O'$ is constructed recursively as follows. 
$P_O \subseteq P'_O$; if $p\in P'_O$, then its complement $p^c$ is in $P'_O$; if 
$p_1, p_2$ are in $P'_O$ and $p_1 \$ p_2$ then $p_1 \lor p_2\in P_O'$. 
By construction all the places in $P' \setminus P$ are observable. 
In the figures, the implicit places will be coloured in light grey.
We denote the net with these additional places as $\Sigma' = (P', T, F', m'_{in})$; in what follow 
$\Sigma'$ will be also referred as \emph{extension} of $\Sigma$ or \emph{extended net}.

Recall, from Sec.~\ref{s:implicit}, that adding implicit places
to a net does not change its behaviour. The extended net just
makes explicit an information which was actually available in
the original net. On the other hand, in defining a strategy
we cannot use information corresponding to the union, or
logical disjunction, of non compatible places, because this
information cannot be coded in the form of a place,
explicit or implicit. 

Let $M^S$ be the set of stable parts of the markings in $[m'_{in}\rangle$; for each $m^s\in M^S$, $o(m^s) = m^s \cap P'_O$, 
i.e. $o(m^s)$ is the set of observable places in the stable part 
of a certain marking. 
We denote $O(M^S) = \{o(m^s) \mid m^s \in M^S\}$ as the set of all these elements. 
\begin{definition}
\label{def:strategy}
A \emph{strategy} is a function $f: O(M^S) \rightarrow 2^{T_u}$ such 
that for each $o(m^s)$, for each $t\in f(o(m^s))$, $^\bullet t \in o(m^s)$. In other words, each transition selected by a 
strategy in the observable and stable part of a marking must 
be enabled in this part of the marking. 
\end{definition}
\begin{example}
\label{ex:sature} 
Consider the net in Fig.~\ref{fig:satura}, without the grey places.  
Assume that the user can observe everything and his 
goal is to reach any marking among $\{p3, p5\}$, $\{p5, p7\}$ and $\{p4, p6\}$. 
A way for the user to reach this goal would be to wait for the environment to move, and to 
fire $t3$ if transition $t1$ occurred, and $t4$ otherwise. 
This would not be possible with a strategy defined on the stable parts of the markings of the 
initial net: the stable part of $\{p1, p2\}$, $\{p2, p3\}$ and $\{p2, p7\}$ is $\{p2\}$, therefore 
the strategy could not differentiate between the initial marking ($\{p1, p2\}$) and the markings reached after 
the occurrence of $t1$ ($\{p2, p3\}$ and $\{p2, p7\}$). 
This problem is overcome when we consider the extended net, with the place $p3 \lor p7$ (added 
in grey in the figure). 
Once that $t1$ occurs, this place will belong to the stable parts of all the markings subsequently 
reached, therefore the strategy can use the information that one of those two places must be 
marked to select transition $t3$. 
Note that the net in Fig.~\ref{fig:satura} is not the complete extended net, but the represented 
places are sufficient for the sake of presentation.
\begin{figure}
    \centering
    \includegraphics[width=0.7\textwidth]{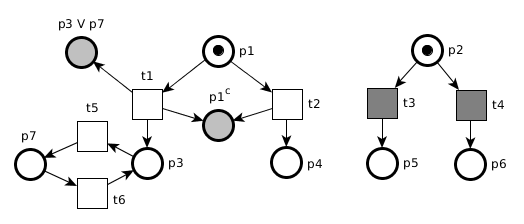}
    \caption{A net and two of its implicit places (in grey)}
    \label{fig:satura}
\end{figure}
\end{example}
The aim of the following example is to show the importance of 
allowing that some places are not observable, even when they 
belong to a stable marking.
\begin{example} 
\label{ex:rapina}
Consider the net in Fig.~\ref{fig:rapina}, where the user 
represents a robber and the environment a banker. 
For the sake of clarity the net is not extended; 
however this is irrelevant for the following discussion. 
The goal of the user is to reach place $w$, representing a 
successful robbery, without ending in place $p$, representing 
the prison. 
In the initial marking, the environment can set the code of 
the vault to one or to two (by executing $sc1$ or $sc2$ 
respectively). 
In order to successfully end the robbery, the user 
needs to choose between $i1$ and $i2$, agreeing with the 
decision of the environment. 
If the robber observed everything, he would not 
need to communicate with the environment in 
order to know the code; he could directly go 
to the vault (through transition $gv$) and observe whether 
$s1$ or $s2$ is marked. 
This happens because $s1$ (resp. $s2$) is in the stable part 
of the markings $\{s1, v\}$ (resp. $\{s2, v\}$), 
since there are no uncontrollable transitions enabled in 
those markings. 
However, this scenario is not realistic, 
and we should assume $s1$ and $s2$ are not visible 
by the robber.

\begin{figure}
    \centering
    \includegraphics[width = 0.8\textwidth]{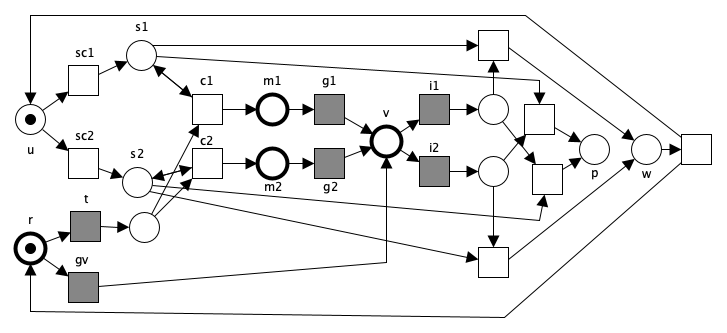}
    \caption{Net inspired by the example presented in \cite{Schobbens04}, where the user represents a robber, 
    and the environment a banker}
    \label{fig:rapina}
\end{figure}

\end{example}
So far we dealt with goals and plays informally, however in order to 
automatically check if the user has a winning strategy, we need 
to formally define them. 

Let $\Sigma$ be a net system and $\Sigma'$ its extension. 
Let $\unf(\Sigma') = (B, E, F, \mu)$ be the unfolding of $\Sigma'$, then the set $E$ can be partitioned into the set of uncontrollable events $E_{env} = \{e\in E \ |\  \mu(E) \in T_{env}\}$ and the set of controllable events $E_u = \{e\in E \ |\  \mu(E) \in T_u\}$.
We define a play as a run on $\unf(\Sigma')$, weakly fair with respect to the uncontrollable
events, together
with an increasing sequence of B-cuts, which can be seen as a potential
record of the play as observed by an external entity. Formally:
\begin{definition}
\label{def:play}
Let $\rho = (B_\rho, E_\rho, F_\rho, \mu_\rho)$ be a run on
$\unf(\Sigma')$ and $\delta = \gamma_0,\gamma_1,\cdots,$ $\gamma_i,\cdots$
an increasing sequence of B-cuts.
The pair $(\rho, \delta)$ is a \emph{play} iff:
\begin{enumerate}
  \item for each uncontrollable event $e$ in $\unf(\Sigma')$, the net obtained by adding
        $e$ and its postconditions to $\rho$ is not a run of $\unf(\Sigma')$;
  \item $\forall e\in E_\rho$ there is a B-cut  $\gamma_i\in\delta$
        such that $e<\gamma_i$. 
\end{enumerate}
\end{definition}
\begin{figure}
    \centering
    \includegraphics[width = 0.9\textwidth]{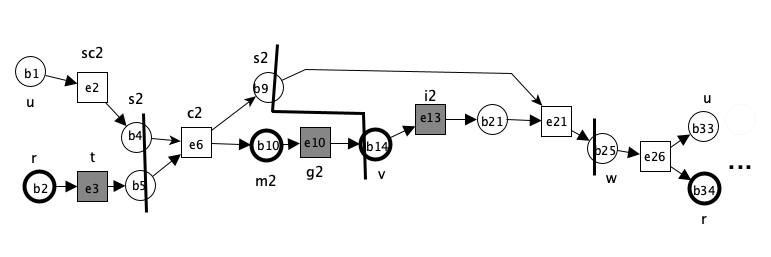}
    \caption{A play on the net in Fig.~\ref{fig:rapina}}
    \label{fig:play}
\end{figure}
Fig.~\ref{fig:play} represents a play on the net in Fig.~\ref{fig:rapina}. 

The \emph{winning condition} for a user is a set of plays. 
We are particularly interested in the case in which the winning 
plays satisfy a certain property, specified through a 
state based $\ltlx$ formula. 
$\ltlx$ is the fragment of LTL in which the X operator has 
been removed; other works, such as \cite{EH01,KPR07,SK04}, 
studied this logic on Petri nets. 
We identify the set of atomic propositions as the set of 
places of the net;
a proposition $p$ is true in a marking $m$ iff $p\in m$. 
\begin{example}
All the goals discussed until this point can be expressed through an $\ltlx$ 
formula. 
As an example, consider the net in Fig.~\ref{fig:play} which represents a play on the 
net discussed in Ex.~\ref{ex:rapina}. 
The goal of the user is represented by the formula $Fw \wedge  G \lnot p$
\end{example}
Given an $\ltlx$ formula $\phi$, and a play $(\rho, \delta)$, 
we can evaluate $\phi$ on a maximal refinement 
$\delta'$ of $\delta$ by considering the sequence of 
markings corresponding to the B-cuts in $\delta'$ and applying the usual semantics of LTL. A play $(\rho, \delta)$ is 
winning for the user when the goal is $\phi$ iff $\phi$ 
holds in all the maximal refinements of $\delta$. 

The following definition allows us to define a fairness constraint for the user. 
\begin{definition}
In a play $(\rho, \delta)$ the user is \emph{finally postponed} 
iff there is a controllable event $e\in E_u$ and a B-cut 
$\gamma_i\in \delta$ such that, for each $\gamma_j\in \delta$ such that $\gamma_i< \gamma_j$, it holds that $^\bullet e \subseteq \gamma_j$, and $\mu_\rho(e)$ is chosen by $f$ in the observable stable part 
of the marking associated with $\gamma_j$; in formula 
$\mu_\rho(e)\in f(o(\mu_\rho(\gamma_j)^s))$.  
\end{definition}
Since the game that we are considering is asynchronous, we 
focus our attention on the plays where the user is not finally 
postponed: if a transition is constantly enabled, the 
environment has no way to prevent at some point its occurrence.
\begin{definition}
A play $(\rho, \delta)$ is \emph{consistent with a strategy} $f$ iff: 
\begin{enumerate}
    \item For each controllable event $e\in E_u\cap E_\rho$, 
    there is a cut $\gamma_i\in\delta$ such that 
    $\mu_\rho(\gamma_i) = m_i$, $\mu_\rho(e)\in f(o(m_i^s))$, and $e$ is the only 
    event between $\gamma_i$ and $\gamma_{i+1}$. In other 
    words, each controllable event must be chosen by the 
    strategy in the observable stable part of a marking 
    that is compatible with a cut of the play. 
    \item The user is not finally postponed. 
\end{enumerate}
\label{def:cplay}
\end{definition}
A strategy $f$ is \emph{winning} for the user iff  there is at least
one play consistent with $f$, and
the user wins every play consistent with $f$, 
whatever the environment does. 
\begin{example}
Consider the net in Fig.~\ref{fig:satura} where the user has goal 
$F(p4 \wedge p5) \lor F((p3 \lor p7) \wedge p6)$. 
If the user observes everything, then there is a winning strategy $f$ 
defined in this way: $f(\{p2\}) = \emptyset, f(\{p2, p3 \lor p7\}) = \{t4\}, 
f(\{p2, p4\}) = \{t3\}$. 
If some of the places are unobservable there may not be a winning strategy. 
Consider $P_O = \{p1, p2, p5, p6\}$. In this case the user does not have a winning strategy: 
by observing the net he knows when $t_1$ or $t_2$ fired (because he will observe the place $p1^c$ of the saturated net), 
but cannot know which of them, and consequently he cannot decide which transition to fire in 
order to reach his goal.
\end{example}

\section{From Petri nets to concurrent game structures} 
\label{sec:pn2cgs}
In this section we show how our game relates to a \emph{turn-based asynchronous game} 
played on a concurrent structure. 
These structures are extensions of Kripke models, and were introduced 
in \cite{AHK02} for model-checking Alternating-time 
Temporal Logic (ATL). 
ATL is an extension of CTL, and expresses the fact that in a 
multi-agent system, a group of agents has a set of winning strategies 
to enforce a certain property on the system. 
The logic $\atls$ extends ATL in the same way that CTL* extends CTL. 
In particular, $\atls$ includes LTL, therefore the problem that we 
want to solve on the net, i.e. checking whether the user is able 
to enforce a behavior specified by an $\ltlx$ formula $\phi$, can be 
translated into the problem of verifying whether the system satisfies 
the corresponding $\atls$ formula $\langle\langle user\rangle\rangle\phi$. 
This fragment of $\atls$ was previously 
considered in \cite{JPSDM20} and denoted with $\atlx$. 
Originally, ATL and concurrent structures were defined in \cite{AHK02} by assuming 
full memory and full observability. 
Subsequent works, such as 
\cite{Schobbens04,BDL09}, adapted the logic and the 
structures to express also contexts with partial observability and 
no memory. 
We will refer to the definition of game structure provided in \cite{Schobbens04}. 

Our aim in discussing this relation is twofold: 
showing how our game can be interpreted in one of the most used 
frames to reason about multi-agent systems, 
and providing an intermediate step that we believe to be helpful for understanding  
the correctness of the method that we will propose in Sec.~\ref{sec:algo} to find a winning strategy. 
Let $\Sigma = (P, T, F, m_{in})$ be a net system, 
$T_u$ the set of controllable transitions, 
$P_O$ the set of observable places, and $\Sigma' = (P', T, F', m'_{in})$ the extension of $\Sigma$. 
We construct a turn-based asynchronous game  structure $G_\Sigma$ where three players interact: 
the user ($u$), the environment ($env$), and a scheduler ($s$). 
$G_\Sigma$ is a structure \emph{derived} from $\Sigma$ and obtained through the following steps.
\begin{enumerate}
    \item We construct the marking graph of $\Sigma$, $\MG(\Sigma)$, and 
    we interpret it as a Kripke model, where each local state is a proposition. 
    For each state $m\in [m_{in}\rangle$, a proposition $p$ is true
    in $m$ iff $p\in m$. 
    \item We compute all the regions of $\MG(\Sigma)$ associated with the unions and 
    the complements of observable places. 
    We update the set of propositions true in each state as 
    follows. 
    For each region $r$, for each state $m\in r$, we add the boolean 
    combination of places associated to $r$ as local proposition that is 
    true in $m$. 
    With this operation we obtain a Kripke model isomorphic to the marking graph 
    of the extended net $\MG(\Sigma')$. 
    In particular, the set of propositions that are true in some states of the model 
    coincides with $P'$.
    \item For each marking $m$, we compute its stable part $m^s$. 
    Let $P'_O$ be the set of observable places in $P'$.
    We define an equivalence relation $\sim_u$ between the markings in 
    this way: for each $m_1, m_2\in [m'_{in}\rangle$, $m_1 \sim_u m_2$ iff 
    $o(m_1^s)= m_1^s \cap P'_O = o(m_2^s) = m_2^s \cap P'_O$. Intuitively, $m_1\sim_u m_2$ if the 
    user cannot distinguish $m_1$ from $m_2$.
    \item From each state $m$, we compute which controllable transitions 
    are enabled in $o(m^s)$ and we remove from $\MG(\Sigma')$  
    the controllable transitions which are enabled in $m$ and not in $o(m^s)$.
    We denote as $M^r \subseteq [m'_{in}\rangle$ the subset of states that are 
    reachable from $m'_{in}$ once that this step has been executed. 
    Since some places may not belong anymore to any reachable marking, the associated 
    propositions are false in all the states of $G_\Sigma$. 
\end{enumerate}
An example of this construction is in Ex.~\ref{ex:pn2g}.
Then, $G_\Sigma = (\{u, env, s\}, M^r, m'_{in}, P', d, \tau, \sim_u)$, where 
$u$, $env$, and $s$ are the three players, i.e. the user, the environment 
and the scheduler respectively, and $d$ and $\tau$ are as follows.
\begin{itemize}
    \item For each $i\in \{u, env, s\}$, for each $m\in M^r$, 
    $d_i(m)$ is the set of \emph{moves} available in $m$ for 
    player $i$. Specifically, for each $m\in M^r$, $d_u(m)$ is the 
    set of controllable transitions outgoing from $m$, plus a 
    move remaining in the same state, from now on denoted with 
    $\epsilon$; $d_{env}(m)$ is the set of uncontrollable 
    transitions outgoing from $m$; if no such transition is enabled, 
    $d_{env}(m) = \{\epsilon\}$; $d_s(m) = \{u, env\}$. 
    For each state $m\in M^r$, a \emph{move vector} enabled in $m$ 
    is a tuple $\langle j_u, j_{env}, j_s \rangle$, 
    where $j_i\in d_i(m)$, $i\in \{u, env, s\}$.
    \item $\tau$ is the \emph{transition function}; for each 
    $m\in M^r$ and vector move $\langle j_u, j_{env}, j_s \rangle$ 
    enabled in $m$, $\tau(m, j_u, j_{env}, j_s)$ is the state  
    reached on $\Sigma'$ when transition $j_{j_s}$ occurs. 
\end{itemize}

\begin{example}
\label{ex:pn2g}
Let $\Sigma$ be the net in Fig.~\ref{fig:net&gs}, except for the grey places. 
Let $p1$ and $p2$ be the observable places. 
In the right part of the figure, we show $\MG(\Sigma)$ (in black and grey), 
and the structure of $G_\Sigma$ (in black). 
In this example, the extended net $\Sigma'$ has only two places more 
than $\Sigma$, represented in grey in the figure, associated with the 
complements of the two observable places.  
As observed in Sec.~\ref{sec:pn}, $\MG(\Sigma')$ is isomorphic to 
$\MG(\Sigma)$, therefore the right part of Fig.~\ref{fig:net&gs} represents 
it. In Table~\ref{tab:stablefig5}, each state of $\MG(\Sigma')$ is 
associated with its stable and observable part. 
As observed in Ex.~\ref{ex:stable}, there is no marking where $p2$ is 
stable. 
For this reason, a winning strategy for the user can never depend from the 
occurrence of $t2$, and when we construct $G_\Sigma$ we can remove all 
the occurrences of this transition as well as of the transitions causally depending from it (as explained in point 4 of the procedure 
above).
\begin{table}
    \centering    
    \caption{Table representing the stable and observable part of each state of the marking graph in Fig.~\ref{fig:net&gs}} 
    \label{tab:stablefig5}
    \begin{tabular}{c|c}
        States & Propositions\\
        \hline
        1, 2 & $p1$ \\
        
        5 & $p1$, $p2^c$\\
      
        3, 6, 8, 11 & $p1^c$\\
        
        4, 7, 9, 10, 12, 13 & $p1^c, p_2^c$\\
    \end{tabular}
\end{table}

\begin{figure}
    \begin{minipage}{0.4\textwidth}
    \centering
    \includegraphics[width = 0.9\textwidth]{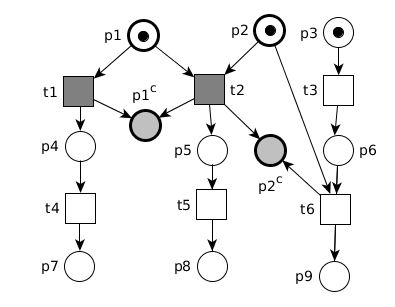}
    \end{minipage}
    \begin{minipage}{0.5\textwidth}
    \centering
    \includegraphics[width = 0.8\textwidth]{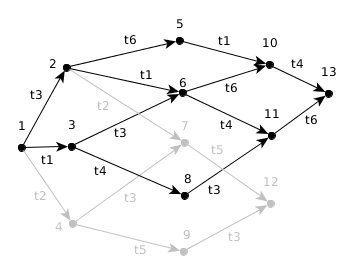}
    \end{minipage}
    \caption{A net system, its marking graph (black and grey) and its associated game structure (black)}
    \label{fig:net&gs}
\end{figure}
\end{example} 
An infinite computation $\lambda$ on $G_\Sigma$ 
is an infinite sequence of states starting from $m'_{in}$ such that the successor of each 
state is fully determined by the moves chosen in the previous one. 
In order to exclude the computations associated with unfair 
runs on the net,
we need to add to $G_\Sigma$ some \emph{weak fairness constraints}, as defined in \cite{AHK02}. 
A weak fairness constraint is a pair 
$\langle i, c\rangle$ where $i\in \{u, env, s\}$ and $c$ is a function 
that associates a set of moves for player $i$ to every state. 
Let $\lambda$ be an infinite computation and $m_k$ its $k-$th element; 
$\langle i, c\rangle$ is \emph{enabled} in $m_k$ if $c(m_k)\neq \emptyset$, 
and is \emph{taken} if there is a vector move $\langle j_u, j_{env}, j_s\rangle$ 
with $j_i\in c(m_k)$, $m_{k+1} = \tau(m_k, j_u, j_{env}, j_s)$, 
and if $i \neq s$, $j_s = i$. An infinite computation $\lambda$ is 
\emph{weakly fair} with respect to $\langle i, c\rangle$, if $\langle i, c\rangle$ 
is not enabled in infinitely many positions of $\lambda$ or it is taken infinitely 
many times. 

We require that there is no point in the computation from which the scheduler 
selects always the same player: this is expressed by the constraints 
$\langle s, c_u\rangle$ and $\langle s, c_{env} \rangle$, where 
$\forall m\in M^r$ $c_i(m) = \{i\}$, $i\in \{u, env\}$. In addition, we require 
that the computations are weakly fair with respect to uncontrollable transitions:
for each uncontrollable transition $t$, we define $\# t: M^r \rightarrow 2^{T_{env}}$ 
such that $\# t(m) = \emptyset$ if $t$ is not enabled in $m$, 
$\# t(m) = \{t\} \cup \{t_i : {\pre t_i}\subseteq m \, \wedge\, t_i \,\#\, t\}$ 
otherwise; for each $t$ uncontrollable, we add the constraint $\langle env, \#t \rangle$. 

For each state $m\in M^r$, let $[m]_u$ be the equivalence class of 
$m$ with respect to $\sim_u$, and $M_u$ be the set of all these classes. 
A strategy is a function $f: M_u \rightarrow T_u \cup \{\epsilon\}$. 
Given $\psi\in \ltlx$, a strategy is \emph{winning} with 
respect to the goal expressed by $\psi$ iff all the fair computations 
in which the user follows the strategy satisfy $\psi$. 

Given a computation $\lambda = m_{in}m_1...m_n...$ (finite or infinite) 
we can construct a new sequence $\pi(\lambda)$ in which all the 
consecutive states such that $m_i = m_{i+1}$ have been identified. 
Given an infinite sequence $\lambda$, $\pi(\lambda)$ can be also 
infinite, if there is no state $m_i\in \lambda$ such that for each 
$m_j : j > i$ $m_j = m_{j+1}$, or finite otherwise. 
\begin{lemma}
\label{prop: equiv}
Let $\psi$ be an $\ltlx$ formula, and $\lambda_1$ and 
$\lambda_2$ be two infinite computations such that 
$\pi(\lambda_1) = \pi(\lambda_2)$. 
$\lambda_1$ satisfies $\psi$ iff $\lambda_2$ does.
\end{lemma}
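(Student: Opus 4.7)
The statement is the classical invariance of LTL without next under stutter equivalence, specialised to the case where two infinite computations have the same stutter-free projection $\pi(\lambda_1) = \pi(\lambda_2)$. My plan is to prove it by structural induction on $\psi$, after setting up a convenient matching between positions of $\lambda_1$ and $\lambda_2$ via their common projection.

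First I would introduce, for any infinite computation $\lambda = m_0 m_1 \dots$, an index map $\kappa_\lambda \colon \mathbb{N} \to \mathbb{N}$ that sends a position $i$ of $\lambda$ to the position of $m_i$ in $\pi(\lambda)$. Concretely, $\kappa_\lambda(0) = 0$ and $\kappa_\lambda(i+1) = \kappa_\lambda(i)$ if $\lambda[i]=\lambda[i+1]$, else $\kappa_\lambda(i+1) = \kappa_\lambda(i)+1$. Set $\pi = \pi(\lambda_1) = \pi(\lambda_2)$; then $\lambda_s[i] = \pi[\kappa_{\lambda_s}(i)]$ for $s \in \{1,2\}$ and each $i$. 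Note that if $\pi$ is finite of length $N{+}1$, both computations end with an infinite repetition of $\pi[N]$, so $\kappa_{\lambda_s}$ eventually stabilises at $N$; if $\pi$ is infinite, both $\kappa_{\lambda_s}$ are surjective onto $\mathbb{N}$. The crucial property to extract is: for every $k$ in the range of $\kappa_{\lambda_1}$ (equivalently of $\kappa_{\lambda_2}$), and every $i_1, i_2$ with $\kappa_{\lambda_1}(i_1)=\kappa_{\lambda_2}(i_2)=k$, the suffixes $\lambda_1^{i_1}$ and $\lambda_2^{i_2}$ are again related by $\pi(\lambda_1^{i_1}) = \pi(\lambda_2^{i_2}) = \pi^k$, where $\pi^k$ is the $k$-th suffix of $\pi$.

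I would then prove by induction on $\psi$ the strengthened statement: for all $i_1,i_2$ with $\kappa_{\lambda_1}(i_1)=\kappa_{\lambda_2}(i_2)$, $\lambda_1^{i_1} \models \psi$ iff $\lambda_2^{i_2} \models \psi$. Atomic propositions: both states equal $\pi[k]$, hence carry the same propositions. Boolean connectives: immediate from the induction hypothesis. The case $\mathbf{G}\psi_1$ (or equivalently its $U$-based definition) uses that $\kappa_{\lambda_1}$ and $\kappa_{\lambda_2}$ attain every index that $\kappa_{\lambda_1}$ does, together with the fact that if $\psi_1$ holds along all positions of $\lambda_1^{i_1}$ in particular it holds at every position whose $\kappa$-image is reached, and by induction hypothesis at all $\kappa$-equivalent positions of $\lambda_2^{i_2}$; the stuttering positions in $\lambda_2$ inherit the validity of $\psi_1$ from their neighbours via the induction hypothesis applied to the boolean content of $\psi_1$ at a single $\kappa$-index.

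The main obstacle, as expected, is the until operator $\psi_1 \, \mathbf{U}\, \psi_2$. Assume $\lambda_1^{i_1} \models \psi_1 \, \mathbf{U}\, \psi_2$: pick the witnessing $j \ge i_1$ with $\lambda_1^j \models \psi_2$ and $\lambda_1^\ell \models \psi_1$ for $i_1 \le \ell < j$. Let $k^\star = \kappa_{\lambda_1}(j)$. Choose any $j'$ with $\kappa_{\lambda_2}(j') = k^\star$; by induction $\lambda_2^{j'} \models \psi_2$. For any $\ell'$ with $i_2 \le \ell' < j'$ the value $k' = \kappa_{\lambda_2}(\ell')$ satisfies $\kappa_{\lambda_1}(i_1) \le k' \le k^\star$, with equality to $k^\star$ excluded when $j'$ is chosen as the \emph{minimal} preimage of $k^\star$ (or handled by a separate argument noting that $\lambda_2[\ell']=\lambda_2[j']$ in that case, so $\psi_1 \lor \psi_2$ suffices). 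Pick any $\ell$ with $\kappa_{\lambda_1}(\ell)=k'$; then $i_1 \le \ell < j$, so $\lambda_1^\ell \models \psi_1$, and the induction hypothesis yields $\lambda_2^{\ell'} \models \psi_1$. The converse direction is symmetric, and the lemma follows by taking $i_1 = i_2 = 0$.
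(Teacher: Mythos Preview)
Your argument is correct: the structural induction on $\psi$, with the strengthened induction hypothesis quantifying over all pairs of positions $(i_1,i_2)$ with $\kappa_{\lambda_1}(i_1)=\kappa_{\lambda_2}(i_2)$, is exactly the standard way to establish stutter-invariance of $\ltlx$. The handling of the $\mathbf{U}$ case is fine once you pick $j'$ as the minimal preimage of $k^\star$, and the $\mathbf{G}$ case is simpler than your wording suggests, since the induction hypothesis already applies to every position of $\lambda_2^{i_2}$ directly, stuttered or not, via its $\kappa$-image.

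The paper, by contrast, does not carry out any induction at all. Its proof is a two-sentence appeal to the folklore fact that, without the $X$ operator, satisfaction of an $\ltlx$ formula depends only on the stutter-reduced sequence of states, i.e.\ on $\pi(\lambda)$, so $\pi(\lambda_1)=\pi(\lambda_2)$ immediately gives the conclusion. Your approach is the honest, self-contained version of what the paper takes for granted; it buys rigor and would be appropriate if the reader could not be assumed to know the stutter-invariance theorem, while the paper's approach buys brevity by treating the lemma as essentially a citation.
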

\begin{proof}
If the operator $X$ is not allowed, then the
validity of $\psi$ depends only on the sequence of distinct states
in computations; since, by hypothesis, $\pi(\lambda_1) = \pi(\lambda_2)$,
the thesis follows.
\end{proof}
\subsection{Relation between plays in the two models}
Let $\Sigma = (P, T, F, m_{in})$ be a net
system, $\Sigma'$ its extension, and $G_\Sigma$ the derived game structure. 

We now prove some propositions that show the relations between 
the plays on the unfolding and the infinite fair computations 
on the concurrent structure.

Let $\lambda=m_0m_1\cdots$ be a fair computation on $G_\Sigma$.
For every pair of consecutive states $m_i, m_{i+1}$, such that 
$m_i \neq m_{i+1}$, the move on the concurrent game structure is
associated, by construction, with a transition $t$ in $\Sigma'$, such that $m_i[t\rangle m_{i+1}$.

Given $\lambda$, we construct a run $\rho$ on the unfolding
of $\Sigma'$, starting 
from the initial cut and adding the events associated 
with the moves that bring from a state to the next one in the same 
order that the states have in $\lambda$.
At the same time we 
construct an increasing sequence $\delta$ of cuts: initially, $\delta = \gamma_0$; 
after an event $e$ is added to the run, we add to $\delta$ the cut 
$\gamma = (\gamma'\setminus {\pre{e}}) \cup \post{e}$, where 
$\gamma'$ is the last cut added in $\delta$ before $e$.
The pair derived in this way will be denoted
by $(\rho,\delta)[\lambda]$.
\begin{lemma}
\label{prop:pcgs2pn}
The pair $(\rho,\delta)[\lambda]$ is a play on $\unf(\Sigma')$.
\end{lemma}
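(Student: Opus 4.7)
The plan is to verify the three requirements of Definition~\ref{def:play} in turn: that $\rho$ is indeed a run of $\unf(\Sigma')$, that $\delta$ is an increasing sequence of B-cuts and that condition 2 holds, and finally that the fairness condition 1 is satisfied. I would proceed by induction on the construction, reading off properties of $(\rho,\delta)$ from properties of the moves in $\lambda$.

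For the first part, I would argue inductively that after processing the prefix $m_0 m_1 \cdots m_k$ of $\lambda$, the partial net constructed is a prefix of $\unf(\Sigma')$ and the last added cut $\gamma$ satisfies $\mu(\gamma)=m_k$. Each non-stuttering step $m_i\to m_{i+1}$ is associated, by construction of $G_\Sigma$, with a transition $t$ of $\Sigma'$ such that $m_i[t\rangle m_{i+1}$, so $\pre{t}\subseteq \mu(\gamma)$; by the defining property of the unfolding, there is a unique event $e\in\unf(\Sigma')$ with $\mu(e)=t$ and $\pre{e}\subseteq\gamma$, and adding $e$ together with $\post{e}$ preserves the branching-process axioms. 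No $\cnfo$ relation is introduced, because any event $e'$ already in $\rho$ that shared a precondition with $e$ would have to have consumed it, contradicting $\pre{e}\subseteq\gamma$. Hence $\rho$ is a run of $\unf(\Sigma')$. The cut $\gamma\mathrel{:=}(\gamma'\setminus\pre{e})\cup\post{e}$ is again a B-cut, and $\gamma'<\gamma$ because $\pre{e}\,F^+\,\post{e}$, so $\delta$ is an increasing sequence. Condition 2 of Definition~\ref{def:play} is immediate: every event $e$ added at step $k$ satisfies $e\,F^+\,b$ for each $b\in\post{e}\subseteq\gamma_{k+1}$, so $e<\gamma_{k+1}$.

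The only nontrivial step is the fairness condition 1, which I would prove by contradiction. Assume there exists an uncontrollable event $e\in\unf(\Sigma')\setminus E_\rho$ such that adding $e$ and $\post{e}$ to $\rho$ still yields a run. Then $\pre{e}\subseteq B_\rho$ and no event of $E_\rho$ consumes any condition of $\pre{e}$. Pick $k$ large enough that every condition in $\pre{e}$ has been produced by step $k$; since no condition in $\pre{e}$ is ever consumed, $\pre{e}\subseteq\gamma_j$ for all $j\ge k$, and therefore $\pre{\mu(e)}=\mu(\pre{e})\subseteq m_j$ for all $j\ge k$. Thus $\mu(e)$ is enabled in every state of $\lambda$ from position $k$ on, so the weak fairness constraint $\langle env, \#\mu(e)\rangle$ is enabled in all but finitely many positions of $\lambda$. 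By the weak fairness of $\lambda$ and the scheduler constraints $\langle s,c_u\rangle,\langle s,c_{env}\rangle$ (which guarantee that the environment is scheduled infinitely often), this constraint is taken at some position $j\ge k$: either $\mu(e)$ itself fires, or a conflicting uncontrollable transition $t$ with $t\,\#\,\mu(e)$ fires. In the first case, by the uniqueness of events in the unfolding, the event added to $\rho$ at position $j$ is exactly $e$, contradicting $e\notin E_\rho$. In the second case, $\pre{t}\cap\pre{\mu(e)}\neq\emptyset$, so the event of $\unf(\Sigma')$ added at position $j$ (which labels $t$ and whose preset lies in $\gamma_j$) must consume some $b\in\pre{e}$, again contradicting our assumption.

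The main obstacle is precisely this last argument, where abstract weak fairness in $G_\Sigma$ has to be translated into the consumption of specific conditions of $\unf(\Sigma')$. The translation requires combining three facts that are all implicit in the setup: the one-to-one correspondence between moves in $\lambda$ and events in $\rho$, the bijection between the preset of an unfolding event and the preset of its label, and the fact that the scheduler fairness constraints prevent the environment from being starved so that any persistently enabled uncontrollable transition must eventually be addressed. Once these are in place, the contradiction is cleanly obtained in either of the two cases of the weak-fairness dichotomy.
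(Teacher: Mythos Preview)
Your proof is correct and follows the same contradiction strategy as the paper: assume an uncontrollable event $e$ can be appended to $\rho$, deduce that $\mu(e)$ is enabled in all but finitely many states of $\lambda$, and derive a violation of the weak fairness constraint $\langle env,\#\mu(e)\rangle$ (using the scheduler constraint $\langle s,c_{env}\rangle$ to rule out starvation of the environment). Your version is in fact more thorough than the paper's, which dismisses condition~2 and the run property as ``by construction'' and does not explicitly split the fairness dichotomy into the case where $\mu(e)$ itself fires versus the case where a conflicting transition fires; your explicit treatment of the second case (showing that a conflicting event must consume some $b\in\pre{e}$) closes a gap the paper leaves implicit.
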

\begin{proof}
The sequence of cuts $\delta$ satisfies condition 2 
of Def.~\ref{def:play} by construction. 

We need to prove that there is no uncontrollable event $e\not\in\rho$ 
such that $\rho \cup \{e\}$ is a run on the unfolding. 
By contradiction, assume that we can find such an event $e$. 
Then, there is $i\in \mathbb{N}$ such that, on $G_\Sigma$, the move $\mu(e) = t$ is 
enabled in every state $m_j\in \lambda : i<j$, but is never chosen. 
The reason cannot be that the environment player is never selected 
by the scheduler, otherwise, the constraint $\langle s, c_{env}\rangle$ 
would not be satisfied, and $\lambda$ would not be a fair computation. 
Hence the move $t$ is available in every state $m_j$ for player $env$, but he never chooses it. 
Then, the computation does not satisfy the fairness constraint 
$\langle env, \#t \rangle$. 

Hence, $(\rho, \delta)[\lambda]$ constructed in this way is a play
on the unfolding.
\end{proof}
With a similar idea, for each strategy $f$ as defined in Def.~\ref{def:strategy},  
we can associate every play on $\unf(\Sigma')$ consistent with 
$f$ with a set of infinite fair computations on $G_\Sigma$: 
let $(\rho, \delta)$ be a play on the unfolding; 
as first step, we consider all the possible sequentializations of events
included between the initial cut $\gamma_0$ and the next 
cut $\gamma_1\in \delta$. For each of these linearizations we can 
find a prefix of a computation on the concurrent game structure by executing 
on it, from the state corresponding to the initial marking, 
all the events in the order given by the linearization; 
then, we extend all these prefixes by considering the 
successive pairs of consecutive cuts, and the sequentializations 
of the events between them. 
If the play on the distributed net system reaches a deadlock, then the 
only possibility in the associated concurrent game structure is to execute the 
transition that remains in the same state infinitely often. 

In this way, we obtain a set of infinite computations 
associated with a play $(\rho, \delta)$, 
denoted by $\Lambda(\rho, \delta)$. 
\begin{lemma}
Let $f$ be a strategy on $\Sigma'$, and $(\rho, \delta)$ be a 
play consistent with $f$ on $\unf(\Sigma')$.
For every computation $\lambda\in \Lambda(\rho, \delta)$ on $G_\Sigma$, 
there is at least one fair computation $\lambda': \pi(\lambda') = \pi(\lambda)$.
\label{prop:ppn2cgs}
\end{lemma}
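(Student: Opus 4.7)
The plan is to construct $\lambda'$ from $\lambda$ by inserting \emph{stuttering} $\epsilon$-moves, which leave the state unchanged and hence preserve $\pi(\lambda')=\pi(\lambda)$, so as to satisfy every weak fairness constraint of $G_\Sigma$. I first observe that the environment constraints $\langle env,\#t\rangle$ are already satisfied by $\lambda$: if some uncontrollable $t$ were continuously enabled in a suffix of $\lambda$ while no transition in $\#t$ ever fired, then no event of $\rho$ in that suffix would share a precondition with the unfolding event $e$ with $\mu(e)=t$; since $\pre{e}\subseteq B_\rho$ (because $\rho$ is causally closed), $\rho\cup\{e\}$ would remain a run, contradicting condition~1 of Definition~\ref{def:play}.

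For the scheduler constraint $\langle s,c_u\rangle$, I would insert u-stuttering moves wherever needed; these are always available since the user may always play $\epsilon$. The nontrivial case is $\langle s,c_{env}\rangle$: it is automatic when $\rho$ contains infinitely many uncontrollable events, and otherwise we let $\gamma_N\in\delta$ be the cut after the last uncontrollable event of $\rho$ (or the final cut, if $\rho$ is finite). The crux is then the claim that no uncontrollable transition is enabled at any $\gamma_k$ with $k\geq N$, since this makes $\epsilon$ available to $env$ and lets me insert env-stuttering moves. Assume for contradiction that $t$ is enabled at $\gamma_k$ and let $e$ be the corresponding unfolding event. By causal closedness, all ancestors of $e$ lie in $\rho$, so the $\cnfo$-conflict forced by weak fairness must reduce to $\pre{e}\cap\pre{e'}\neq\emptyset$ for some $e'\in\rho$; this $e'$ fires at a cut $\gamma^{e'}\geq\gamma_k>\gamma_N$ and is therefore controllable, say $\mu(e')=c^*$. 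Let $p\in\pre{t}\cap\pre{c^*}$ be the shared place: an iterative argument (any earlier controllable in $\rho$ that would disable $t$ yields the same contradiction sooner) shows that $t$ remains enabled up to the firing of $c^*$, so $p$ is not in the stable part of $\mu(\gamma^{e'})$, contradicting $\pre{c^*}\subseteq o(m^s)$ from Definition~\ref{def:strategy}.

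The deadlock case, in which $\rho$ is finite, is subsumed: by weak fairness no uncontrollable is enabled at the final cut (otherwise it would trivially be addable to $\rho$), so both players have $\epsilon$ available and both scheduler constraints can be met by alternating u- and env-stuttering moves forever. The main obstacle is precisely the ``no uncontrollable enabled after $\gamma_N$'' claim, whose proof couples the maximality of $\rho$ with respect to uncontrollable events with the stability requirement built into Definition~\ref{def:strategy}; once that is established, the rest of the argument is a routine bookkeeping of where to insert the stuttering moves.
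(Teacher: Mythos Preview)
Your argument is correct and, in one respect, more careful than the paper's own proof. The overall decomposition is the same: both you and the paper treat the finite-$\rho$ case separately (you subsume it), handle $\langle s,c_u\rangle$ by inserting user $\epsilon$-moves, and derive the $\langle env,\#t\rangle$ constraints from condition~1 of Definition~\ref{def:play}.

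The genuine difference lies in the treatment of $\langle s,c_{env}\rangle$ for infinite $\rho$. The paper simply asserts that ``since a play on the unfolding must be fair with respect to uncontrollable transitions, the fairness constraint $\langle s,c_{env}\rangle$ must be satisfied'', and never inserts environment stuttering. This is not obviously justified: if $\rho$ contains only finitely many uncontrollable events but infinitely many controllable ones, the scheduler selects $env$ only finitely often in $\lambda$, and the constraint fails unless one can insert $env$-$\epsilon$ moves---which is possible only when $d_{env}(m)=\{\epsilon\}$, i.e.\ when no uncontrollable transition is enabled. You supply precisely the missing argument: using the stability clause of Definition~\ref{def:strategy}, you show that after the last uncontrollable event no uncontrollable transition can be enabled at any cut in $\delta$, because otherwise the controllable event that eventually consumes one of its preconditions would violate $\pre{c^*}\subseteq o(m^s)$. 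This is the key observation, and it is absent from the paper (it can be recovered implicitly from step~4 of the construction of $G_\Sigma$, but the paper does not make the connection). Your proof therefore closes a gap that the paper leaves open; what it buys is a self-contained justification of why $env$-stuttering is always available in the tail.

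One minor point: in your first paragraph, the claim that ``no event of $\rho$ in that suffix would share a precondition with $e$'' tacitly uses the same stability argument to exclude \emph{controllable} events with a self-loop on some $p\in\pre{t}$ (such an event would leave $t$ enabled at the marking level while changing the unfolding conditions). This is harmless---the stability reasoning you deploy later covers it---but it is worth noting that the exclusion of controllable conflicts is not purely a consequence of ``no transition in $\#t$ fired'', since $\#t$ ranges only over $T_{env}$.
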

\begin{proof}
The set of constraints of the scheduler must guarantee that no 
player is neglected forever. 
If $\rho$ is finite then every $\lambda \in \Lambda(\rho, \delta)$ is fair: 
the only move available for the environment is the empty move, that keeps the 
system in the same state; 
when the scheduler selects the user, he can decide to 
execute the move that keep the structure in the same state.

In addition, when $\rho$ is finite, also the constraints on the  
environment are satisfied: after a finite number of states there is 
no uncontrollable enabled transition, therefore those 
constraints are infinitely not enabled.

We now consider the case in which $\rho$ is infinite. 
Since a play on the unfolding must be fair with respect to 
uncontrollable transitions, in the computation $\lambda$ 
the fairness constraint $\langle s, c_{env}\rangle$ must be satisfied. 
If $\rho$ has infinitely many controllable events, 
then the fairness constraint $\langle s, c_u\rangle$ 
is also satisfied by construction; 
otherwise, since  the user has always 
the possibility not to move in the system, we can 
construct a sequence  $\lambda' : \pi(\lambda) = \pi(\lambda')$ 
with some repeated states, that represent points of the sequence in which the 
user was selected by the scheduler in $G_\Sigma$, but the 
state of the system did not change. 

The set of fairness constraints for the environment must be 
satisfied: by contradiction, we assume that there is a 
function $\#t$, such that $\langle env, \#t\rangle$ 
is not satisfied in $\lambda'$; this means that $\langle env, \#t\rangle$ 
is not enabled in a finite number of states in $\lambda$, but 
it is taken only a finite number of times. 
This means that there is an uncontrollable event $e\not\in\rho$ in the 
net that is enabled in all the cuts compatible with $\delta$, except 
for a finite set; since $\rho$ is infinite, there must be a 
cut $\gamma\in \delta$ such that $e$ is enabled in $\gamma$ and 
in all the cuts $\gamma_j \in \rho : \gamma_j > \gamma$; hence 
$\rho \cup \{e\}$ is a run on $\unf(\Sigma')$.
This is in contradiction with the hypothesis that $\rho$ is a play. 
\end{proof}
\begin{lemma}
Let $\psi$ be an $\ltlx$ formula.  
A computation $\lambda$ satisfies $\psi$ iff $(\rho, \delta)[\lambda]$ does. 
Analogously, a play $(\rho, \delta)$ satisfies $\psi$ iff 
all the computations in $\Lambda(\rho, \delta)$ do.
\end{lemma}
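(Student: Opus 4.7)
The plan is to reduce both statements to the stuttering-invariance of $\ltlx$, namely Lemma~\ref{prop: equiv}, by pairing each relevant object with a sequence of distinct markings on which $\psi$ can be evaluated, and showing that these sequences coincide.

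For the first claim, I would argue that $\delta$ as built in $(\rho,\delta)[\lambda]$ is already a maximal refinement of itself. Indeed, by construction a new cut is appended to $\delta$ exactly once for each added event, so consecutive cuts $\gamma_i,\gamma_{i+1}$ in $\delta$ satisfy $\gamma_{i+1}=(\gamma_i\setminus {\pre e})\cup \post e$ for the single event $e$ inserted between them; no B-cut can lie strictly between $\gamma_i$ and $\gamma_{i+1}$, so $\delta$ admits no further compatible refinement. The sequence of markings associated with $\delta$ is therefore, by the construction of $(\rho,\delta)[\lambda]$ from $\lambda$, exactly the sequence obtained from $\lambda$ by deleting the stutters caused by $\epsilon$-moves and by scheduler choices that do not change the state, i.e. it is $\pi(\lambda)$. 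Applying the semantics of $\ltlx$ on plays (evaluation along a maximal refinement) and Lemma~\ref{prop: equiv} then yields $\lambda\models\psi$ iff $\pi(\lambda)\models\psi$ iff $(\rho,\delta)[\lambda]\models\psi$.

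For the second claim, I would set up a correspondence between maximal refinements $\delta'$ of $\delta$ and the computations in $\Lambda(\rho,\delta)$. Between two consecutive cuts $\gamma_i,\gamma_{i+1}$ of $\delta$ there is a finite set of events in $\rho$ whose occurrence brings one cut to the other; a maximal refinement inserts between $\gamma_i$ and $\gamma_{i+1}$ a sequence of cuts corresponding to a total order of these events compatible with causality, and conversely, each such linearisation gives a unique maximal refinement. By the definition of $\Lambda(\rho,\delta)$, the computations it contains are precisely obtained by picking, for every such segment, one of these linearisations and interleaving it with $\epsilon$-moves and scheduler choices that keep the state unchanged. Hence, for every $\lambda\in\Lambda(\rho,\delta)$, $\pi(\lambda)$ is the sequence of markings labelling some maximal refinement $\delta'$ of $\delta$, and every maximal refinement arises this way. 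Evaluating $\psi$ on all maximal refinements of $\delta$ is therefore equivalent to evaluating it on $\pi(\lambda)$ for all $\lambda\in\Lambda(\rho,\delta)$, which by Lemma~\ref{prop: equiv} is equivalent to evaluating it on every $\lambda\in\Lambda(\rho,\delta)$ directly.

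The main technical obstacle I anticipate is the bijective-style argument between maximal refinements of $\delta$ and event linearisations between consecutive cuts, in particular handling concurrent events correctly and verifying that every maximal refinement really is exhausted by such linearisations; everything else reduces cleanly to the already-proved stuttering-invariance lemma.
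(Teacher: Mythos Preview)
Your proposal is correct and follows essentially the same approach as the paper: for the first claim you argue, as the paper does, that $\delta$ is already a maximal refinement and hence determines a unique sequentialisation matching $\lambda$ up to stuttering; for the second claim you set up the same correspondence between maximal refinements of $\delta$ and elements of $\Lambda(\rho,\delta)$, then invoke stuttering invariance. Your treatment is in fact more explicit than the paper's, which dispatches both points with brief ``by construction'' remarks and leaves the bijection you flag as a technical obstacle entirely implicit.
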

\begin{proof}
By construction, $\delta$ is a maximal refinement, 
therefore $(\rho, \delta)[\lambda]$ allows a unique 
sequentialization equivalent to the computation $\lambda$. 

Each element $\lambda\in \Lambda(\rho, \delta)$  is associated with 
a maximal refinement of $\delta$ by construction. 
If all the elements in $\Lambda(\rho, \delta)$ satisfy $\psi$, then also all the refinements of 
$\delta$ do. Viceversa, if there is a $\lambda\in \Lambda(\rho, \delta)$ that does not satisfy $\psi$, 
also the $\delta'$ maximal refinement of $\delta$ 
associated with $\lambda$ does not.
\end{proof}
\subsection{Relations between winning strategies in the two models}%
\label{sec:relstrat} 
Let $f': O(M^S) \rightarrow 2^{T_u}$ be a strategy for the user on  
$\Sigma'$.
We call $f_{f'}$ a strategy on $G_\Sigma$ constructed 
from $f'$ as follows.
\begin{enumerate}
    \item If $f'(o(m^s)) \neq \emptyset$, and $t$ is 
    a transition in $f'(o(m^s))$ arbitrarily chosen, 
    then $f_{f'}([m]_u) = t$.
    \item $f_{f'}([m]_u) = \epsilon$ otherwise.
\end{enumerate} 
Analogously, let $f: M_u \rightarrow T_u \cup \{\epsilon\}$ be a  
strategy on $G_\Sigma$; 
we define the derived strategy $f'_f$ on $\Sigma'$ as follows:
\begin{enumerate}
    \item $f'_f(o(m^s)) = \{f([m]_u)\}$, if $f([m]_u) \neq \epsilon$.
    \item $f'_f(o(m^s)) = \emptyset$, otherwise.
\end{enumerate}
\begin{theorem}
Let $\psi$ be an $\ltlx$ formula, $\Sigma$ a net system, $\Sigma'$ its extension,
and $G_\Sigma$ the derived game structure. 
A winning strategy exists on $\Sigma'$
for $\psi$ 
iff a winning strategy exists on $G_\Sigma$.
\end{theorem}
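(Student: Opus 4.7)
The plan is to prove the two implications symmetrically, exploiting the translations $f'\mapsto f_{f'}$ and $f\mapsto f'_f$ between strategies on $\Sigma'$ and $G_\Sigma$ together with Lemmas~\ref{prop:pcgs2pn} and \ref{prop:ppn2cgs}, which already link plays with fair computations, and the satisfaction lemma, which transfers the truth value of $\ltlx$ formulas across the two representations.

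For the direction $\Sigma'\Rightarrow G_\Sigma$, suppose $f'$ is winning. I would show $f_{f'}$ is winning by taking an arbitrary fair computation $\lambda$ on $G_\Sigma$ consistent with $f_{f'}$ and applying Lemma~\ref{prop:pcgs2pn} to produce the play $(\rho,\delta)[\lambda]$ on $\unf(\Sigma')$. The key observation is that every controllable move executed in $\lambda$ corresponds, by construction of $f_{f'}$, to a transition in $f'(o(m^s))$, so the induced play automatically satisfies clause (1) of Def.~\ref{def:cplay}. Clause (2), i.e.\ that the user is not finally postponed, follows from the fairness constraint $\langle s,c_u\rangle$: any controllable event eventually enabled and whose label is chosen by $f'$ would correspond to a move available to $u$ at the corresponding states, so it must eventually fire. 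Hence $(\rho,\delta)[\lambda]$ is consistent with $f'$; since $f'$ is winning, it satisfies $\psi$, and the satisfaction lemma transfers this back to $\lambda$. Existence of at least one consistent fair computation follows by picking any play consistent with $f'$ and invoking Lemma~\ref{prop:ppn2cgs}.

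For the converse, assume $f$ is winning on $G_\Sigma$ and take a play $(\rho,\delta)$ consistent with $f'_f$. Applying Lemma~\ref{prop:ppn2cgs} to a computation $\lambda\in\Lambda(\rho,\delta)$ yields a fair computation $\lambda'$ with $\pi(\lambda')=\pi(\lambda)$; by the definition of $f'_f$ the controllable moves appearing in $\lambda'$ are exactly those prescribed by $f$, so $\lambda'$ is fair and consistent with $f$. Hence $\lambda'$ satisfies $\psi$; by Lemma~\ref{prop: equiv} so does $\lambda$, and by the satisfaction lemma so does $(\rho,\delta)$. Existence of a consistent play for $f'_f$ is symmetric: start from any fair computation consistent with $f$ and take its image under $(\rho,\delta)[\cdot]$.

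The main obstacle I expect is the careful bookkeeping of fairness. I must verify precisely that each fair computation yields a play that is not finally postponed (using $\langle s, c_u\rangle$) and fair with respect to uncontrollable transitions (using $\langle env,\#t\rangle$), and, in the reverse direction, that these game-theoretic constraints are met by the computations assembled from a consistent play. A secondary subtlety is that $f_{f'}$ arbitrarily selects one transition when $f'(o(m^s))$ is not a singleton; this causes no harm because winning for $f_{f'}$ only requires that every fair computation compatible with the chosen transition satisfy $\psi$, and any such computation still induces a play consistent with $f'$.
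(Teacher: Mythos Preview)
Your proposal is correct and follows essentially the same route as the paper: both directions use the translations $f'\mapsto f_{f'}$ and $f\mapsto f'_f$, invoke Lemmas~\ref{prop:pcgs2pn}, \ref{prop:ppn2cgs} and \ref{prop: equiv}, and verify consistency with the target strategy by checking the two clauses of Def.~\ref{def:cplay} against the fairness constraints $\langle s,c_u\rangle$ and $\langle env,\#t\rangle$. The only cosmetic differences are that the paper phrases the $G_\Sigma\Rightarrow\Sigma'$ direction as a proof by contradiction (picking a violating maximal refinement rather than an arbitrary $\lambda\in\Lambda(\rho,\delta)$), and that you explicitly address the existence of a consistent play, which the paper leaves implicit.
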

\begin{proof}
As first step, we show that if $f'$ is a  
winning strategy on $\Sigma'$ 
for $\psi$, then $f_{f'}$ is a 
winning strategy for $\psi$ on $G_\Sigma$.

Let \emph{out}$(m'_{in}, f_{f'})$ be the set of fair computations starting 
from $m'_{in}$ that the user enforces by following the 
strategy $f_{f'}$.
For any $\lambda\in$\emph{out}$(m'_{in}, f_{f'})$, we show that 
$(\rho, \delta)[\lambda]$ satisfies the conditions of Def. 
\ref{def:cplay}, and therefore is consistent with
$f'$: (1) let $e\in E_u\cap \rho$ be an event controllable by the user. 
By construction, $\delta$ is a maximal refinement, 
hence there must be two cuts $\gamma_j$ and $\gamma_{j+1}$ such 
that $e$ is the only event between them. By construction, if 
$e$ was added to the run after $\gamma_j$, there must be a state 
$m\in \lambda : \mu(\gamma_j) = m$ such that $f_{f'}([m]_u)) = \mu(e)$. 
By construction of the strategy, $\mu(e)\in f'(o(m^s))$.
(2) By contradiction, assume that the user is finally 
postponed; then there is a cut $\gamma\in \delta$ such that for each cut 
$\gamma_j$ compatible with $\delta$ such that 
$\gamma_j > \gamma$, $f'(o(\mu(\gamma_j)^s))\neq \emptyset$. By construction, 
there must be a state $m_i\in\lambda : \mu(\gamma) = m_i$ such that 
$f([m_j]_u) \neq \emptyset$ for each $m_j, j > i$; then $\lambda$ 
cannot be fair with respect to $\langle s, c_{u}\rangle$. 
$(\rho, \delta)[\lambda]$ is a play, and it is consistent with the 
strategy by construction, hence it is a winning play for the 
user and it satisfies the property expressed by $\psi$. 
Hence also $\lambda$ satisfies it on the concurrent game structure.

As second step, let $f$ be a  
winning strategy on $G_\Sigma$. 
We show that $f'_f$ is a winning strategy on $\Sigma'$.

Let $(\rho, \delta)$ be a play consistent with $f'_f$.
By contradiction, we assume that the formula 
$\psi$ is not satisfied on $\unf(\Sigma')$, 
and that the play $(\rho, \delta)$ testifies it. 
Then there must be a maximal refinement $\delta'$ of $\delta$ 
that does not satisfy the formula. 
$\delta'$ is associated with an infinite sequence of states 
$\lambda$ on $G_\Sigma$. 
If $\lambda$ is a fair computation on $G_\Sigma$, 
then it is also a computation consistent with the strategy: 
let $t$ be a user's move in $m_i\in \lambda$. 
By construction, there must be $e: \mu(e) = t$ 
in $(\rho, \delta)$ such that $t\in f(o(\mu(\gamma_i)^s))$, where $\gamma_i\in \delta'$ and $\mu(\gamma_i) = m_i$. 

If $\lambda$ is not fair, we know by Lemma~\ref{prop:ppn2cgs} 
that we can construct a fair computation $\lambda'$ 
such that $\pi(\lambda) = \pi(\lambda')$. 
In addition, from the proof of Lemma~\ref{prop:ppn2cgs} 
we know that the only reason why $\lambda$ can be unfair 
is that the user is finally neglected 
during the play. This cannot happen if the user  has 
the strategy finally not-empty, otherwise the play on 
the net would not be consistent with this strategy 
(by definition the strategy selects only enabled events).
Then,  there is an infinite number of states in the sequence in which 
$f$ selects only the transition that keeps the system 
in the same state. We add a copy of these states in the 
position coming immediately after them. 
This computation $\lambda'$  is fair with respect to 
the user, therefore it is a play consistent with the 
strategy on the concurrent game structure and by hypothesis is winning. 
By construction $\pi(\lambda) = \pi(\lambda')$, hence 
Lemma~\ref{prop: equiv} guarantees that also 
$\lambda$ respect $\psi$.
\end{proof}
\section{Finding a winning strategy}
\label{sec:algo}
In this section we describe a method to find a winning strategy for the 
user on the net system, if one exists. 
In Sec.~\ref{sec:pn2cgs} we showed that this is equivalent to find a strategy 
on the derived game structure, and we mentioned that this problem can be 
translated into the problem of model-checking a $\atlx$ formula. 
However, to our knowledge, none of the model-checkers developed for ATL can be 
directly applied to solve the problem as we stated. 
This is because we want to model check a fragment of ATL*, whereas most of the 
tools only support ATL, and because even those tools 
supporting ATL* do not allow to specify the observability and the fairness 
constraints that we require. Among the most prominent tools, MOCHA \cite{AHMQ98} 
works only for synchronous systems and allows neither partial observability nor 
any kind of fairness constraint; MCMAS \cite{LQR17} does not support partial 
observability and consider only the so called \emph{unconditional fairness}, 
where each fair computation must cross infinitely often some subsets of states; 
finally the recent STV+ \cite{kurpiewski21} supports a special case of partial 
observability, in which the agents can observe only their local states, and does 
not support any fairness constraint. 

In this section we will focus on the research of a winning strategy on the derived 
game structure. 
The method that we propose can be summarized in the following steps, that will be explained in detail in the next subsection. 
\begin{enumerate}
    \item Among the set of potential winning strategies, 
    we select a strategy $f$ for $G_\Sigma$, and we prune $G_\Sigma$ accordingly. 
    \item From the pruned $G_\Sigma$ and the strategy 
    $f$, we construct a Kripke model $K(G_\Sigma, f)$ that allows us to verify 
    fairness constraints without having them externally specified.
    To the same aim, we construct a formula $\psi'$ from $\psi$ such that 
    the user has a winning strategy for $\psi$ on $G_\Sigma$ if, and only if, 
    $\psi'$ is verified in the initial state on $K(G_\Sigma, f)$. 
    \item We check if the strategy is winning with the help of an
    LTL model-checker; if it is, we stop and return that strategy,
    otherwise we mark the current strategy and all those equivalent
    to it as checked, and we repeat the steps.
\end{enumerate}
We say that two strategies $f$ and $f'$ are \emph{equivalent} if 
the set of states reachable from the initial state of 
$K(G_\Sigma, f)$ is equivalent to the set of states reachable from 
the initial state of $K(G_\Sigma, f')$. 
This may happen because some states become unreachable after the 
pruning process. In this case, $f$ is winning iff $f'$ is.
\subsection{Encoding fairness constraints into the Kripke model}
Let $G_\Sigma = (\{u, env, s\}, M^r, m'_{in}, P', d, \tau, \sim_u$) 
be a derived game structure with fairness constraints, and 
$f$ be a strategy on $G_\Sigma$. 
We modify $G_\Sigma$ and construct a Kripke model $K(G_\Sigma, f)$ as follows. 
Let $P^f$ be the set of atomic propositions of $K$. 
$P^f$ includes all the elements in $P'$. In addition, for each
transition $t\in T_{env}$, we add an element to $P^f$. Abusing the notation, these 
propositions will be denoted with the name of their associated 
transitions. Finally $P^f$ includes the two propositions $env$ and $u$.  

For each state $m \in M^r$, we delete all the outgoing 
controllable transitions that are not selected by $f$ in $[m]$. 
We split all the other arcs by adding for each of them an 
intermediate state.  
Let $t$ be a transition outgoing from $m$ in $G_\Sigma$, 
and $l$ be the intermediate state to add. 
All the propositions true in $m$ and all those associated with uncontrollable
transitions not enabled at $m$ are true in $l$. 
In addition, if $t$ is controllable or $f([m]) = \emptyset$, 
$u$ is true in $l$; if $m$ does not enable any transition in 
$T_{env}$, $env$ is true in $l$; if $t\in T_{env}$ all the 
propositions in $\#t$ are true in $l$. 

Finally, for each state $m$ in $M^r$ if $m$ does not enable any 
transition in $T_{env}$ and $f([m]) = \emptyset$, we add a state 
$l$ and two transitions so to create a cycle between $m$ and $l$. 
In $l$ the true propositions are those true in $m$ and all those 
in $P^f\setminus P'$. 
\begin{example}
Consider the net system  and its concurrent game structure represented 
in Fig.~\ref{fig:net&gs}. Assume that the goal of the user is $F(p7 \wedge p9)$, 
and that the user follows the strategy $f$ such that $f(\{p_1, p2^c\}) = \{t1\}$, 
$f(o(m^s)) = \emptyset$ otherwise. The Kripke model $K(G_\Sigma, f)$ encoding this 
strategy and the fairness constraints is represented in black in Fig.~\ref{fig:reduct}. 
The grey part represents the parts of the game structure $G_\Sigma$ that must be deleted. 
Each transition in $K(G_\Sigma, f)$ is doubled with respect to the same transition in 
$G_\Sigma$. The states dividing each transition ($1', 2', 5'$ etc.) are used to store 
the information about fairness constraints.  
The true propositions in each of these new states are in Table~\ref{tab:stabstr}. 
We can use this Kripke model to check if $f$ is winning, as formalized in Lemma~\ref{lem:fairness}. 
\begin{table}[]
    \centering
    \caption{True propositions in the states added during the construction of $K(G_\Sigma, f)$}
    \begin{tabular}{c|c}
    New states & True propositions\\
    \hline
        $1'$ & $p1, p2, p3, u, env, t3, t4, t5, t6$ \\
        $2'$ & $p1, p2, p6, u, env, t3, t4, t5, t6$\\
        $5'$ & $p1, p2^c, p9, u, t3, t4, t5, t6$\\
        $10'$ & $p1^c, p2^c, p4, p9, u, env, t3,t4,t5,t6$\\
        $13'$ & $p1^c, p2^c, p7, p9, u, env, t3,t4,t5,t6$
    \end{tabular}

    \label{tab:stabstr}
\end{table}

\begin{figure}
    \centering
    \includegraphics[width = 0.6\textwidth]{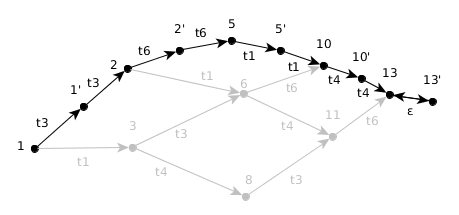}
    \caption{A Kripke model derived from the game structure in Fig.~\ref{fig:net&gs}}
    \label{fig:reduct}
\end{figure}
\end{example}
\begin{lemma}
\label{lem:fairness}
Let $\psi$ be an $\ltlx$ formula, $G_\Sigma$ a derived structure, 
$f$ a strategy and $K(G_\Sigma, f)$ the derived Kripke model. 
$f$ is winning on $G_\Sigma$ if, and only if, 
$\psi' = (\bigwedge_{x \in \{u, env \}}GF\,x \wedge \bigwedge_{t_j \in T_{env}}GF t_j) \rightarrow \psi$ holds in 
the initial state on $K(G_\Sigma, f)$.
\end{lemma}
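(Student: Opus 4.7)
The plan is to exhibit a tight correspondence between infinite paths from $m'_{in}$ in $K(G_\Sigma, f)$ and infinite computations of $G_\Sigma$ starting at $m'_{in}$ that are consistent with $f$, so that under this correspondence the fairness antecedent of $\psi'$ matches the weak fairness constraints of the game structure, and $\psi$ is preserved. The argument then splits into the two implications of the biconditional.

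First I would observe that the pruning step of the construction (deleting outgoing controllable transitions not selected by $f$) already confines attention to moves permitted by $f$. Any infinite path $\pi_K$ in $K(G_\Sigma,f)$ from $m'_{in}$ alternates between original states $m_i\in M^r$ and intermediate states $l_i$ introduced by the arc-splitting step. Projecting out the $l_i$ yields an infinite sequence $\lambda=m_0 m_1 m_2\cdots$ which is a computation of $G_\Sigma$ that the user can enforce by following $f$; conversely, every computation consistent with $f$ lifts uniquely to a path in $K(G_\Sigma,f)$, using the dead-end cycle for states where neither player has a move.

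Next I would verify that the labelling of the intermediate states with the atomic propositions $u$, $env$ and the $t_j\in T_{env}$ is set up so that each weak fairness constraint of $G_\Sigma$ corresponds to exactly one $GF$ conjunct in the antecedent of $\psi'$. Concretely, $GF\,u$ along $\pi_K$ holds iff infinitely often either the scheduler picks the user or $f([m])=\emptyset$ at the current state, which is exactly the weak fairness condition for $\langle s, c_u \rangle$ (``taken or not enabled''). Analogous checks apply to $GF\,env$ and to $GF\,t_j$ for every $t_j\in T_{env}$, using that the proposition $t_j$ is made true at an intermediate state precisely when the corresponding constraint is taken in that step or is disabled at the source marking. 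The dead-end case, with the explicit cycle added when both players are idle at $m$, forces all fairness propositions true on the cycle's intermediate state, which matches the fact that such a terminal behaviour vacuously satisfies every weak fairness constraint of $G_\Sigma$.

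Third, I would appeal to Lemma~\ref{prop: equiv}: since every proposition in $P'$ true at $m_i$ is also true at the following intermediate $l_i$ by construction, and $\psi$ mentions only propositions from $P'$, the sequences $\pi_K$ and $\lambda$ are stuttering-equivalent over the alphabet of $\psi$, so $\pi_K\models\psi$ iff $\lambda\models\psi$. With these three ingredients the two directions are then routine. For ($\Rightarrow$), given $f$ winning and any $\pi_K$, either the antecedent of $\psi'$ fails, and $\psi'$ holds vacuously, or it holds, in which case $\lambda$ is a fair computation consistent with $f$ satisfying $\psi$, and $\pi_K$ inherits $\psi$ by stuttering invariance. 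For ($\Leftarrow$), any fair $f$-consistent $\lambda$ lifts to a $\pi_K$ satisfying the antecedent; by hypothesis $\pi_K\models\psi$, and $\psi$ transfers back to $\lambda$. The main obstacle is the second step: checking carefully that every fairness constraint of $G_\Sigma$ is both soundly and completely captured by its $GF$ conjunct, given the subtle conditions under which $u$, $env$ and the $\#t$-propositions are forced true on the intermediate states. In particular one must make sure that the ``disabled infinitely often'' disjunct of weak fairness is not lost, that it is not spuriously satisfied when the constraint is actually neglected, and that the handling of deadlock states via the added cycle is consistent with the asynchronous game semantics used throughout the paper.
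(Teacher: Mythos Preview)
Your proposal follows the same architecture as the paper's proof: project paths of $K(G_\Sigma,f)$ onto computations of $G_\Sigma$ by dropping the intermediate states, argue that the fairness antecedent encodes the weak fairness constraints, and transfer $\psi$ by stuttering invariance (Lemma~\ref{prop: equiv}). The overall line is correct.

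There is, however, one point where your formulation does not quite match the semantics and where the paper is more careful. You write that ``$GF\,u$ along $\pi_K$ holds iff infinitely often either the scheduler picks the user or $f([m])=\emptyset$ \ldots\ which is exactly the weak fairness condition for $\langle s, c_u\rangle$ (`taken or not enabled')''. But $c_u(m)=\{u\}$ for every $m$, so $\langle s,c_u\rangle$ is \emph{always} enabled; its weak fairness requirement is simply ``the scheduler selects $u$ infinitely often''. The clause $f([m])=\emptyset$ is therefore not a ``not enabled'' case of this constraint. Relatedly, your claimed bijection ``every computation consistent with $f$ lifts uniquely to a path in $K(G_\Sigma,f)$'' fails for computations containing $\epsilon$-steps (scheduler picks a player who stays put) at non-deadlock states: $K(G_\Sigma,f)$ has no edge recording such a step.

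The paper resolves both issues at once: rather than a one-to-one correspondence, it shows that the antecedent of $\psi'$ holds on $\lambda^f$ iff there exists some fair computation $\lambda'$ on $G_\Sigma$ with $\pi(\lambda')=\pi(\lambda)$, where $\lambda$ is the projection of $\lambda^f$. The states where $u$ is forced true because $f([m])=\emptyset$ (and similarly $env$ when no uncontrollable transition is enabled) are exactly the positions at which one \emph{can} insert a stuttering $\epsilon$-step for the neglected player without changing $\pi(\lambda)$; this is what produces the fair $\lambda'$. Your argument becomes correct once you replace the direct ``taken or not enabled'' reading of $GF\,u$ and $GF\,env$ by this up-to-stuttering formulation; everything else you wrote goes through unchanged.
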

\begin{proof}
We associate to every infinite computation $\lambda^f$ on $K(G_\Sigma, f)$ an infinite computation $\lambda$ on 
$G_\Sigma$ by removing from $\lambda^f$ all the states that are not in $M^r$.

We first show that for each infinite path $\lambda^f$ in $K(G_\Sigma, f)$,
$\lambda^f$ satisfies 
\[
    (\bigwedge_{x \in \{u, env \}}GF\,x \wedge \bigwedge_{t_j \in T_{env}}GF t_j)
\]
iff there is a computation 
$\lambda'$ in $G_\Sigma$ such that $\pi(\lambda) = \pi(\lambda')$, and $\lambda'$ 
satisfies all the fairness constraints in $G_\Sigma$. 
Assume such a $\lambda'$ exists, then both the user and the 
environment must have been selected infinitely often by the scheduler. 
If $\lambda'$ is obtained through the occurrence of  
infinitely many uncontrollable and controllable transitions, 
then by construction $\bigwedge_{i \in \{u, env \}}GF i$ holds in $\lambda^f$. 
If there is a finite number of uncontrollable 
transitions, there must be an index $i$ such that 
for each $\lambda'_j: j> i$, $\lambda'_j$ does not enable any 
uncontrollable transition. Since $\pi(\lambda) = \pi(\lambda')$, this property must hold also for $\lambda$. 
By construction, for each state in $\lambda$ where no uncontrollable transition 
is enabled, there is a state in $\lambda^f$ where $env$ is true. 
Analogously for $u$ if no controllable transition is 
enabled from a certain state on. 
In addition, there may be a finite number of controllable transitions also if 
the strategy for the user does not select any transition 
from a certain state on. By construction, for each 
of these states in $\lambda$ there is a state on $\lambda^f$ where $u$ is true. 
Vice versa, if there is no fair $\lambda'$, also the formula $\bigwedge_{i \in \{u, env \}}GF i$ does 
not hold on $\lambda^f$, since there are no 
other cases than those discussed above in which $u$ 
or $env$ are true in a state.
If $\lambda$ satisfies also the constraints for the environment, then the formula $\bigwedge_{t_j \in T_{env}}GF t_j$ must hold in $\lambda^f$, because 
by construction, for each $t_j\in T_{env}$, $t_j$ is true in a state 
$l$ of $\lambda^f$ iff there is a state $m$ in $\lambda$ and a transition $t_i$ bringing to the next state of $\lambda$ and 
such that $t_j$ is not enabled at $m$ (and therefore its associated fairness constraint is also not enabled) 
or $t_j \in \#t_i$ (and therefore the constraint is taken in $\lambda$). 

As second step, we observe that $\psi$ holds 
in $\lambda$ iff it holds in $\lambda^f$ (since all 
the propositions of $\psi$ are in $P$, this can be seen as a Corollary of Lemma~\ref{prop: equiv}). 

Then, if $\psi'$ holds in $K(G_\Sigma, f)$, in all the fair paths 
in $G_\Sigma$ that follow $f$, $\psi$ holds, and 
therefore $f$ is winning on $G_\Sigma$. 
Vice versa, if $\psi$ does not hold in $K(G_\Sigma, f)$, there is at least a 
fair path on $G_\Sigma$ that follows $f$ and such 
that $\psi$ is not verified, therefore $f$ is not winning on $G_\Sigma$
\end{proof}
Lemma~\ref{lem:fairness} allows us to prove the correctness of the algorithmic procedure proposed at 
the beginning of this section to find a winning strategy, if one exists. 
\subsection{Complexity}
\label{sec:compl}
\begin{theorem}
The complexity of finding a winning strategy is at least PSPACE and 
at most EXPSPACE in the dimension of the net. 
\end{theorem}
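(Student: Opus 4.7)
The plan is to establish the two bounds separately: the PSPACE lower bound by reducing from a known PSPACE-hard problem on 1-safe nets, and the EXPSPACE upper bound by bounding the resources consumed by the procedure from the beginning of Section~\ref{sec:algo} together with standard LTL model-checking bounds. Throughout, I take \emph{dimension of the net} to mean $|P|+|T|$.

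For the lower bound, I would reduce from LTL-X model checking of 1-safe net systems (equivalently, from reachability of 1-safe nets, which is already PSPACE-complete). Given an instance $(\Sigma,\varphi)$ of this problem, I would build a game on $\Sigma$ by declaring every transition uncontrollable ($T_u=\emptyset$) and every place observable, and asking whether the user has a winning strategy for $\varphi$. Since $T_u=\emptyset$, the only available strategy is the empty one, and (by the correctness results of the previous section together with the fairness requirement on uncontrollable transitions) it is winning exactly when every fair run of $\Sigma$ satisfies $\varphi$. Because LTL-X model checking with fairness is already PSPACE-hard on 1-safe nets, our problem is at least PSPACE.

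For the upper bound, I would analyse the algorithm step by step. The marking graph $\MG(\Sigma')$ has at most $2^{|P|}$ states, and the extended place set $P'$ produced by closing $P_O$ under complements and compatible unions can be represented implicitly (each element as a region, i.e.\ a subset of $\MG(\Sigma')$), so the whole game structure $G_\Sigma$ and every Kripke model $K(G_\Sigma,f)$ have size at most exponential in the dimension of the net. For any fixed strategy $f$, Lemma~\ref{lem:fairness} reduces the check ``is $f$ winning?'' to LTL model checking of $\psi'$ on $K(G_\Sigma,f)$, which can be performed in space polynomial in $|K(G_\Sigma,f)|+|\psi'|$, hence in exponential space in the input. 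Strategies themselves can be enumerated one at a time: each memoryless observation-based strategy is determined by a choice of controllable transition (or $\epsilon$) per equivalence class of $\sim_u$, so a single strategy is writable in exponential space, and we iterate through them using only exponential space at any moment. Combining the enumeration with the LTL check gives a nondeterministic exponential-space procedure; by Savitch's theorem this yields the EXPSPACE bound.

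The main obstacle I expect is controlling the size of $P'$ and the number of equivalence classes of $\sim_u$ carefully enough to conclude exponential (rather than doubly exponential) space for the Kripke model and for the strategy representation: the closure under complement and compatible union is finite, but bounding it requires observing that every such region is a subset of the reachable markings, so $|P'|\le 2^{|P|}$ when represented by its extension, which keeps $|K(G_\Sigma,f)|$ singly exponential. Once that bound is in hand, the rest is a routine combination of known PSPACE LTL model checking with nondeterministic enumeration of strategies.
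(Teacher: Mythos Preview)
Your upper bound argument matches the paper's: both bound the marking graph by an exponential in the size of $\Sigma$, observe that constructing $K(G_\Sigma,f)$ and model-checking $\psi'$ on it is PSPACE in that exponential object, and enumerate memoryless strategies one by one. One small caveat: your line ``every such region is a subset of the reachable markings, so $|P'|\le 2^{|P|}$'' does not follow as stated---there are up to $2^{2^{|P|}}$ subsets of the reachable markings---but the paper is equally laconic here, simply asserting that the closure under complements and compatible unions is exponential. In either proof the conclusion survives because what the algorithm actually needs is the set of $\sim_u$-classes and the state set of $K(G_\Sigma,f)$, both bounded by the number of reachable markings times $|T|$, while the formula $\psi'$ mentions only propositions from $P\cup T_{env}\cup\{u,env\}$.

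Your lower bound, however, takes the opposite route from the paper. The paper makes every transition \emph{controllable} and reduces directly from reachability: with $T_u=T$ each reachable marking coincides with its own stable part, so any simple path $m_0[t_0\rangle m_1\cdots[t_n\rangle m$ induces a memoryless winning strategy for the goal $F\bigl(\bigwedge_{p\in m}p\wedge\bigwedge_{p\notin m}\neg p\bigr)$, and conversely any winning strategy witnesses that $m$ is reachable. You instead make every transition \emph{uncontrollable}, so the unique strategy is the empty one, and it is winning iff every weakly-fair run satisfies $\varphi$; this embeds fair $\ltlx$ model checking of 1-safe nets into the game. Both reductions are sound. The paper's is more self-contained---it invokes only the PSPACE-completeness of reachability, already cited in the text---whereas yours has the conceptual appeal of exhibiting the game problem as a conservative extension of ordinary LTL verification, at the cost of relying on the (standard but external) PSPACE-hardness of the latter.
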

\begin{proof}
In order to prove that finding a winning strategy is at least PSPACE, we 
show that this problem is at least as hard as the reachability problem. 
This consists in deciding whether a given marking is reachable from the 
initial marking, and it is proved to be PSPACE-complete for 1-safe nets \cite{CEP95}. 
Let $\Sigma = (P, T, F, m_0)$ be a net system, and $m \subseteq P$ be a marking. 
We associate to the reachability problem for $m$ from $m_0$ a game on the net. 
On $\Sigma$, we assume that all the places are observable, and all the transitions 
are controllable by the user. Since every transition is controllable, for 
each marking, the stable part is the entire marking, therefore the 
strategy is defined on the reachable markings of the net. 
Let the goal of the user be $F(\bigwedge_{p_i\in m}p_i \wedge \bigwedge_{p_i\in P\setminus m} \lnot p_i)$. 
A winning strategy for the user exists iff $m$ is reachable: 
\begin{itemize}
    \item If $m$ is reachable, then there is a sequence of transitions that 
    brings from $m_0$ to $m$. Let $\sigma = t_0... t_n$ be this sequence, 
    and $m_0m_1...m_n$ be the corresponding sequence of markings, so that 
    $m_0[t_0\rangle m_1[t_1\rangle...m_n[t_n\rangle m$. 
    We can assume that for each $i\neq j$, $m_i\neq m_j$. If this is not the 
    case, we can construct another path that from $m_0$ arrives to $m$ 
    by observing that if $m_l = m_{l+k}$ for some $l, k$, then the sequence 
    of transitions 
    $t_0t_1...t_{l-1} t_{l+k}...t_n$
    also reaches $m$ and the 
    repetition between $m_l$ and $m_{l+k}$ has been deleted. 
    Then a winning strategy is a function $f$ such that 
    $f(m_i) = \{t_i\}$ for each $i\in \{0,...,n\}$.  
    \item If there is a winning strategy for the user, then by construction  
    the user reaches $m$ after a finite number of steps. 
\end{itemize}
We now prove the upper limit by showing that the algorithm that 
we proposed is at most EXPSPACE. 
In general, the marking graph $\MG(\Sigma)$ has exponential size 
with respect to the size of $\Sigma$; computing all the unions 
and complements of the observable places is also exponential with 
respect to $\Sigma$.
Computing the stable and observable parts of all the markings 
requires polynomial time with respect to $\MG(\Sigma)$. 
Let $G_\Sigma$ be the concurrent structure derived from $\Sigma$. 
By construction, $\MG(\Sigma)$ and $G_\Sigma$ have the same size. 
In our algorithm we guess a strategy $f$, we construct a Kripke 
model $K(G_\Sigma, f)$ and we verify an LTL formula.
Since constructing a strategy $f$ and $K(G_\Sigma, f)$ is 
polynomial with respect of $G_\Sigma$, and model-checking of LTL-X 
is PSPACE, the algorithm is PSPACE with respect to the size of 
$\MG(\Sigma)$. 
Since $\MG(\Sigma)$ is in general exponentially 
larger than $\Sigma$, the algorithm is at most EXPSPACE.
\end{proof}
\section{Related works}
\label{sec:art}
Among the works putting together temporal logic and asynchronous systems, a prominent line is 
represented by the works on ATL, already introduced 
in Sec.~\ref{sec:pn2cgs}. 
After its introduction in \cite{AHK02}, many authors  
proposed extensions for it, by providing 
interpretation of this logic with bounded or no memory (\emph{imperfect recall}), 
and partial observability (\emph{imperfect information}) \cite{Schobbens04,BDL09,LMS15}. 
Complexity and decidability results were proved for 
these semantics \cite{Schobbens04,LMO07}, and 
some heuristics where proposed to improve the performances of the algorithms in practice \cite{PBJ14}.
ATL is interpreted over \emph{concurrent game structures} (\cite{AHK02,Schobbens04}), or 
\emph{interleaved interpreted systems} (\cite{BPQR15,JPSDM20}). 
In Sec.~\ref{sec:algo} we shortly discussed  
some tools for its model checking. 

Other extension increasing the expressiveness of ALT 
includes the use of strategy contexts \cite{BDL09,LMS15} and probability \cite{KNPS20,CFKP13}.

The game on the unfolding defined in this paper 
develops some of the ideas presented in \cite{ABP21}. 
However, in \cite{ABP21}  we assumed 
the user to have full knowledge of the current global state of the system, and we defined the game 
only for a restricted class of 1-safe nets.  
In \cite{ABP21}, the goal of the user is to reach 
a target transition, and the algorithm to find a strategy 
is developed on a prefix of the unfolding. 

In \cite{BKP17} another game with a reachability goal on a Petri net is defined. 
In particular, the goal of the user consists in firing a 
target transition infinitely many times. 
Under the assumption of full observability, the authors reduce the game 
on the net to a Streett game on a graph. 

In \cite{FO17,GOW20} a different notion of game on the unfolding is presented. 
In this game, the players are represented as tokens on the net, and they are divided into two teams, the 
\emph{system} and the \emph{environment} players. 
The goal of the system players is to avoid a set 
of bad markings. Each token knows its own past, and is 
informed about the past of the others when it synchronizes with 
them through transitions in the net. 
Also in this case, in order to find a winning strategy, 
the authors translate their problem on a graph. 

\section{Conclusion}
\label{sec:concl}
In this paper we defined a game between two players, 
a user and an environment, on the unfolding of a 
Petri net. We proposed a notion of information 
available for the user that considers both that the 
value of some places may not be accessible at all for 
the user, and that the value of some places may change before 
the user can observe them. 

In the case where the user has no memory and must guarantee a property 
expressed with a $\ltlx$ formula, we provided a method 
to determine whether he has a winning strategy, 
i.e. he can enforce the system to satisfy the $\ltlx$ formula. 

Our definition of the strategy and of the user's observations 
keeps into account the presence of different components in the system, 
that may change their 
state, while the communication of a previous step reaches the user. 
This makes our model suitable to represent distributed system, 
where different components may be physically distant from each other. 
In future works we plan to exploit even more the concurrent 
structure of the net, by looking for strategies directly on the unfolding. 

In Sec.~\ref{sec:compl} we showed that the problem considered in 
this paper has a high complexity, that may create problems for 
practical uses. To address this problem, we will work in two directions: 
finding heuristics that allow for fast computations in most of the cases, 
and identifying subclasses of goals and structural properties of the net with lower complexity. 
We implemented a preliminary version of a model-checker tool and tested it on 
small examples. Not surprisingly, on bigger examples the tool spends most of the 
time testing different strategies. 
In order to improve the performance, we will focus on the research of more 
convenient orders to select the strategies to check. 

In addition, we intend to study the cases of bounded and full memory strategies. 
For these cases, we plan to define the memory of the user through the 
structure of the unfolding. 
We plan to find algorithms also for this case, and to compare the information of 
the user in this case with the information of a full memory strategy defined on 
a game structure. 

Finally, we are working at the synthesis of strategies 
by adding a controller component to the original system in order to 
restrict its behaviors, so that every 
execution satisfies the desired specification, in 
the same line of \cite{RW87}.
\acknowledgements
This work is partially supported by the Italian MUR.
\bibliographystyle{fundam}
\bibliography{abp22fi}

\end{document}